\newcommand{\ket}[1]{|#1\rangle}
\newcommand{\bra}[1]{\langle#1|}
\DeclareMathOperator{\trace}{tr}
\newtheorem{theorem}{Theorem}
\begin{document}

\title[Article Title]{Learnability of a hybrid quantum-classical neural network for graph-structured quantum data}

\author[1]{\fnm{Yanying} \sur{Liang}}

\author[1]{\fnm{Sile} \sur{Tang}}

\author[1]{\fnm{Zhehao} \sur{Yi}}

\author[1]{\fnm{Haozhen} \sur{Situ}}

\author*[2,3]{\fnm{Zhu-Jun} \sur{Zheng}}\email{zhengzj@scut.edu.cn}

\affil*[1]{\orgdiv{College of Mathematics and Informatics}, \orgname{South China Agricultural University}, \city{Guangzhou}, \postcode{510642}, \country{China}}

\affil[2]{\orgdiv{College of Mathematics}, \orgname{South China University of Technology},  \city{Guangzhou}, \postcode{510642}, \country{China}}

\affil[3]{\orgdiv{Laboratory of Quantum Science and Engineering}, \orgname{South China University of Technology}, \city{Guangzhou}, \postcode{510642}, \country{China}}

\abstract{
Graph-structured data commonly arise in many real-world applications, and this extends naturally into the quantum setting, where quantum data with inherent graph structures are frequently generated by typical quantum data sources. However, existing state-of-the-art models often lack training and evaluation on deeper quantum neural networks. In this work, we design a hybrid quantum-classical neural network with deep residual learning, termed Res-HQCNN, specifically designed to handle graph-structured quantum data.
Building upon this architecture, we systematically explore the interplay between residual block structures and graph information in both training and testing phases. Through extensive experiments, we demonstrate that incorporating graph structure information into the quantum data significantly improves learning efficiency compared to the existing model. Additionally, we conduct comparative experiments to evaluate the effectiveness of residual blocks. Our results show that the residual structure enables deeper Res-HQCNN models to learn graph-structured quantum data more efficiently and accurately.
}

\keywords{graph-structured quantum data, quantum neural networks, deep residual learning, quantum computing}

\maketitle

\section{Introduction}\label{sec1}

Graph structures are prevalent in natural science, which are widely used when representing many-to-many relationships. Mathematically, a graph is a data structure denoted as $G=(V,E)$, where $V$ is a set of vertices and $E$ is a set of edges. In the context of quantum data, graph-structured quantum data refers to a specialized form of quantum information generated by structured quantum devices. The graph structure can be imposed on quantum data by associating a classical graph $G=(V,E)$ with the density operators of quantum states in a Hilbert space. For example, if we consider a distributed set of quantum information processors and a graph $G=(V,E)$, then there exists a quantum information processor at some vertex $i$ of vertex $V$ , which takes a quantum state $\rho_i$ as an input. A symbol $\sigma_i$ is an ideal output corresponding to $\rho_i$ in a training set $\{(\rho_i,\sigma_i)|i=1,2,\cdots, N\}$. As for the edges in $E$, they capture the connectivity structure of training data, which quantifies the correlations of two neighboring quantum states.
 
Quantum machine learning (QML) has the potential to improve the analysis of quantum data or classical data due to the use of quantum information theory \cite{biamonte2017quantum,cerezo2022challenges}.
Up to now, there already exists some researches for graph-structured data using QML method \cite{cong2019,gui2019,ai2022decompositional,coushu_albrecht2023quantum,dernbach2019quantum,hu2022design,mernyei2022equivariant,shah2021quantum,zheng2021quantum,choi2021tutorial,zhang2019quantum}. After comparing the different ideas in the aforementioned articles, we observe that they mainly focus on integrating the learning capabilities of quantum neural networks (QNNs) with the topological information of graphs within quantum circuits. In this way, the flexible models proposed in these studies not only preserve the essential relationships within the data but also help reduce information loss. However, during experimental implementation, these models often require preprocessing steps that convert classical data into quantum data, typically using amplitude encoding or qubit encoding, which increases computational complexity. To overcome this limitation, we aim to directly utilize graph-structured quantum data to explore the potential quantum advantages of QNNs. 

Notably, in 2023, Beer \emph{et al.} attempted to use universal QNNs to learn graph-structured quantum data \cite{beer2023quantum}. 
By designing an information-theoretic cost function to capture the graph structure of quantum data, they prove that the graph structure in quantum data can lead to better learning of QNNs.
In general, to achieve better training performance, the depth of the neural network is often increased, as it is well understood that a deeper neural network cannot perform worse than a shallower one in terms of representation capacity \cite{he2016deep}. However, increasing the depth of the neural network may lead to a deterioration in cost function performance, due to degradation problems such as vanishing gradients or barren plateaus. To solve this kind of problems in QNNs, a hybrid quantum-classical neural network with deep residual learning (Res-HQCNN) was proposed to improve the numerical results for deeper QNNs \cite{liang2021hybrid}. Making use of the power of residual block structure \cite{he2016deep}, Res-HQCNN has better ability to learn an unitary transformation, and has stronger robustness for noise quantum data. As far as we know, Res-HQCNN is the first model to transform deep residual block structure into quantum concept and showed its power in deeper QNNs.
Later, other state-of-the-art models emerged that adopted the residual approach \cite{kashif2024resqnets,wen2024enhancing}.

\begin{figure*}[htbp]
\centering
\includegraphics[width=0.9\textwidth]{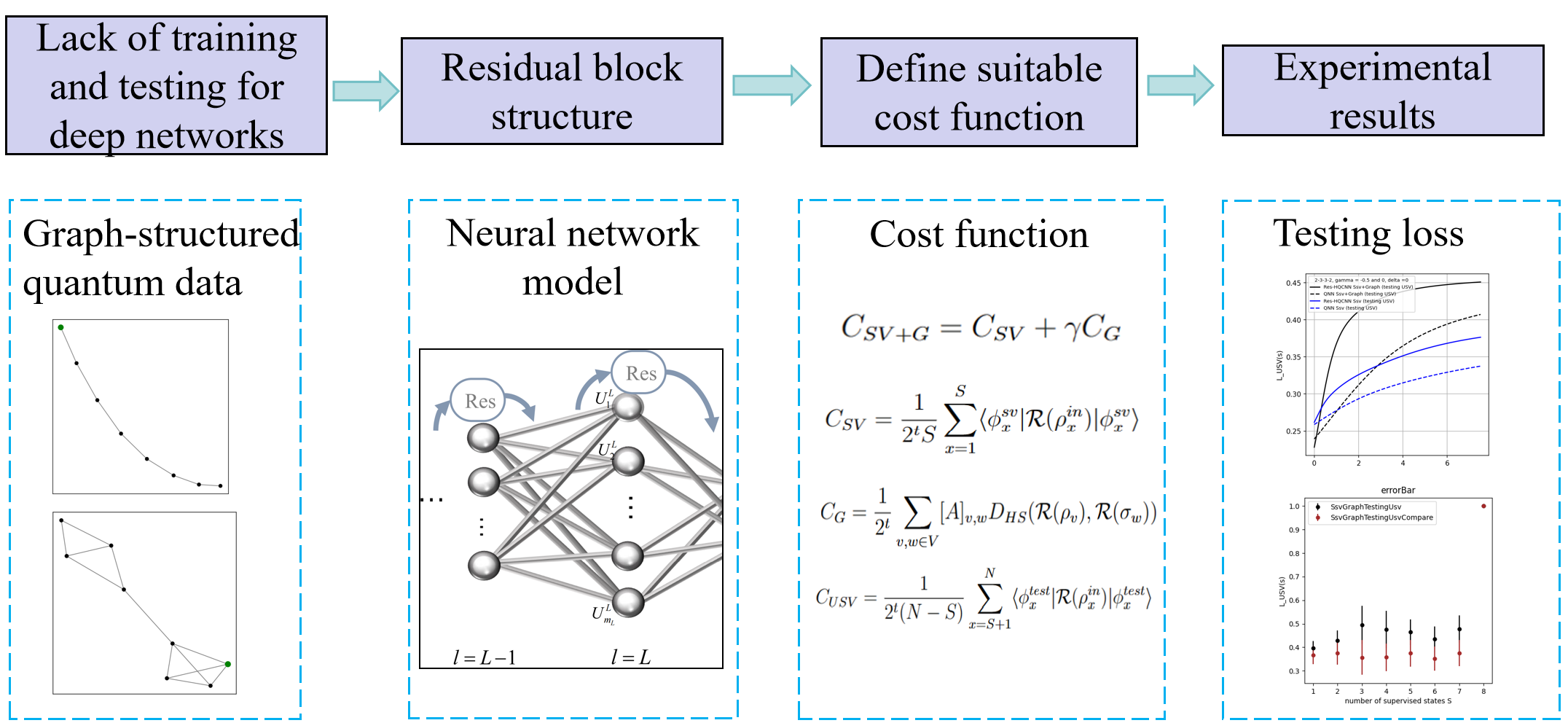}
\caption{Overall block diagram of this paper}\label{figure0}
\end{figure*}

In this paper, we will explore the potential of residual approaches for handling graph-structured quantum data in deeper QNNs.
We aim to evaluate whether Res-HQCNN can effectively process graph-structured quantum data and deliver promising experimental results. Specifically, the first objective of this paper is to demonstrate that incorporating graph structure information in quantum data can enhance the learning efficiency of QNNs, compared to existing state-of-the-art models. The second objective is to show that, when applied to deeper QNN architectures, residual block structures can further improve the network's ability to learn from graph-structured quantum data.
This exploration presents both challenges and opportunities. The overall framework of the paper is illustrated in Fig.\ref{figure0}, and the main contributions are as follows:
 \begin{itemize}
     \item Propose the model of Res-HQCNN with graph-structured quantum data, and define the cost function for quantum data with or without graph. 
     \item Present the training algorithm with analysis from the perspective of propagating information feedforward and backward.
     \item A comprehensive set of experiments is designed for Res-HQCNN, with and without graph-structured quantum data, to rigorously assess the contribution of graph information to learning effectiveness.
     \item Design experiments to evaluate the performance of Res-HQCNN with graph-structured quantum data at varying network depths, aiming to demonstrate the effectiveness of the residual block structure.
 \end{itemize}

 The remainder of this paper is organized as follows. Section \ref{22} reviews related works on QNNs dealing with graph-structured classical and quantum data, as well as deep residual networks from both classical and quantum perspectives.  Section \ref{44} presents our approach to learning graph-structured quantum data using Res-HQCNN, including detailed descriptions of the model architecture and the training algorithm for semi-supervised quantum states with or without graph. To test the learning efficiency, Section \ref{5} reports experimental simulations along with comprehensive analysis. Finally, section \ref{6} concludes the paper and offers discussions on future directions.

\section{Related works}\label{22}

\subsection{QNNs for graph-structured classical and quantum data}

In the case of graph-structured classical data, it is essential to develop suitable encoding strategies that map the data into quantum states, enabling QNNs to process it with minimal information loss.
In 2019, a quantum walk neural network was proposed, which learns coin operators determining the quantum random walks \cite{dernbach2019quantum}. In 2021, the authors in \cite{chen2021hybrid} presented a hybrid quantum-classical graph convolutional neural network to learn high energy physics data, deepening the research of high-energy physics.
Hu \emph{et al.} chose the Givens rotations and corresponding quantum implementation to encode graph information \cite{hu2022design}. And based on variational quantum circuits, they designed an efficient quantum graph convolutional neural network to conduct semi-supervised learning on graph-structured classical data. In 2023, Andrea Skolik \emph{et al.} made use of an equivariant quantum circuit for learning tasks on weighted graphs, emphasizing the important role of symmetry-preserving ansatzes in QML \cite{skolik2023equivariant}. Recently, a quantum graph neural network has been proposed to predict the properties of chemistry and physics for materials \cite{ryu2023quantum}.

When it comes to graph-structured quantum data, in 2023, Beer \emph{et al.} described the definition of graph-structured quantum data formally, and showed how to use a pure quantum neural network to learn graph-structured quantum data \cite{beer2023quantum}. 
But they only tested the learning ability of QNNs with two or three layers. As we know, deeper QNNs could perform better than the shallower ones. Moreover, in order to avoid the degradation problems in deep QNNs, the residual approach should be considered. So combining all of the above considerations, we decide to use Res-HQCNN proposed by us in Ref.\cite{liang2021hybrid} to learn graph-structured quantum data, especially in deep architecture.

\subsection{Residual block structure in classical and quantum neural networks}

\begin{figure*}[htbp]
\centering
\subfigure[In classical neural network]{
\includegraphics[width=9cm]{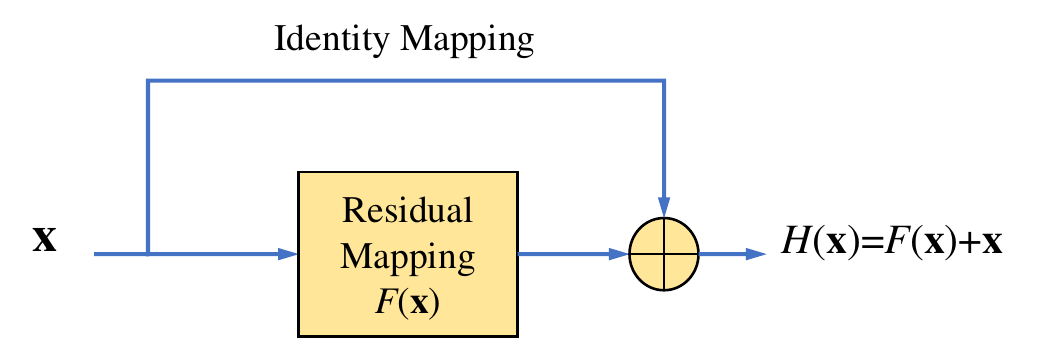}
}
\subfigure[In quantum neural network]{
\includegraphics[width=10cm]{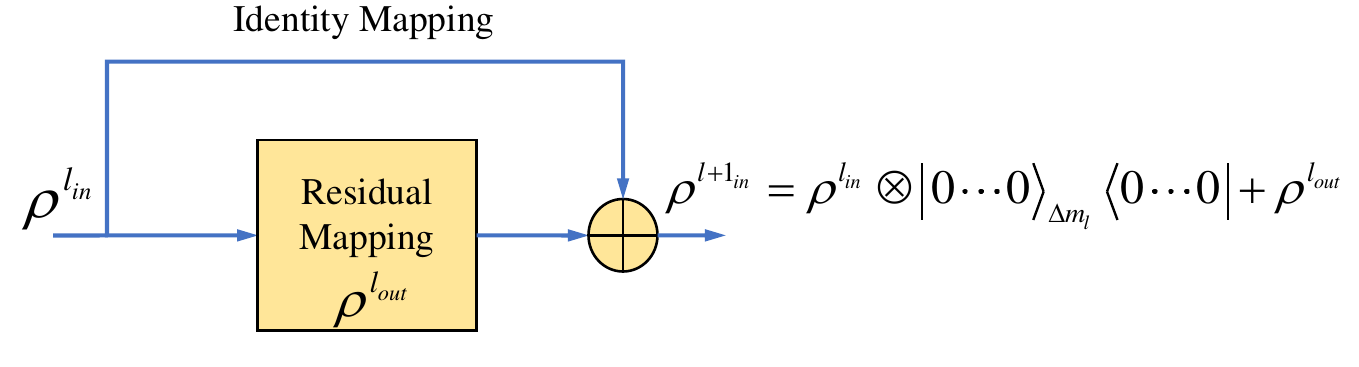}
}
\caption{ \textbf{Residual block structure in classical and quantum neural network}}
\label{Figure1}
\end{figure*} 

In classical neural networks, deeper neural networks are not easy to train due to possible degradation problems. In 2016, residual block structure was proposed to solve the degradation problems \cite{he2016deep}, which is shown in Fig.\ref{Figure1}(a). The classical residual block structure can deal with hundreds or thousands of layers in one neural network with satisfying results of training and testing loss. Later, researchers proposed many variants of deep residual networks, and also obtained the state-of-the-art experimental performance \cite{jian2016deep,lee2018deep,wu2018deep,zhang2019deep,shafiq2022deep,zhang2023selective,alenezi2023wavelet}. 

In QNNs, a novel definition of residual block structure was proposed in quantum concept \cite{liang2021hybrid}, which is shown in Fig.\ref{Figure1}(b). In  Fig.\ref{Figure1}(b), $m_l$ represents the number of nodes in layer $l$, $\rho^{l_{in}}$ and $\rho^{l_{out}}$ mean the input and output quantum state for layer $l$. The corresponding experimental simulations in \cite{liang2021hybrid} demonstrate the power of residual block structure for both clean and noisy quantum data in deep QNNs. 

In this paper, when the architecture of Res-HQCNN has more layers, we expect the residual block structure in Fig.\ref{Figure1}(b) will show its power for graph-structured quantum data.

\section{Res-HQCNN learns graph-structured quantum data}\label{44}

\subsection{Graph-structured quantum data}

Assume the quantum system in this paper is isomorphic to $(\mathbb{C}^2)^{\otimes m}$. The quantum
device produce the quantum states with structure in some probability distribution, which leads to $(\rho_{v_j},p_{v_j})$ with $j=1,2,\cdots,n$ for quantum sources over set $\{\rho_{v_1},\rho_{v_2},\cdots,\rho_{v_n}\}$.
Specifically, a graph-structured quantum data can be explained as follows. Consider a graph $G=(V,E)$ with nodes $V$ and edges $E$, there is a map $\rho: V \xrightarrow{} \mathcal{D(H)}$ 
transforming a vertex $v$
to a density operator $\rho_v$ on $ \mathcal{H}$.
The edge set $E$
describes the connectivity structure of the quantum data, and quantifies the information-theoretic closeness between neighboring quantum states. Here 
we define that two states $\rho_v$ and $\rho_w$ are neighboring for the edge between vertex $v$ and $w$ if they are close in a given information metric $d$, that is $d(\rho_v,\rho_w)\leq \epsilon$ \cite{beer2023quantum}. Assume that we know which graph-structured quantum data are close, so that we can quantify and explore the information-theoretic closeness of these quantum data through experimental setup.

Without loss of generality, the training data includes $S$ supervised vertices firstly and $N-S$ unsupervised vertices with $0\leq S \leq N$. So the full quantum data set for training and testing tasks can be written as
\begin{align}\label{data}
\{(\rho_1,\ket{\phi_1^{sv}}\bra{\phi_1^{sv}}),\cdots,(\rho_S,\ket{\phi_S^{sv}}\bra{\phi_S^{sv}}), \nonumber\\
(\rho_{S+1},\ket{\phi_{S+1}^{test}}\bra{\phi_{S+1}^{test}}),\cdots,(\rho_{N},\ket{\phi_{N}^{test}}\bra{\phi_{N}^{test}})\}
\end{align}

The supervised training data are possibly unknown generated quantum states in the form of $(\rho_S,V\ket{\phi_S^{sv}}\bra{\phi_S^{sv}}))$ for $x=1,2,\cdots,S$ with an unknown unitary matrix $V$. The elements of $\ket{\phi_S^{sv}}$ are randomly selected from a normal distribution before normalization. The elements of unitary matrix $V$ are randomly selected from a normal distribution before orthogonalization. 

\subsection{Res-HQCNN with graph-structured quantum data}

Res-HQCNN is a powerful hybrid quantum-classical neural network that combines the expressivity of quantum neural networks with the training stability introduced by residual connections. This architecture has been shown to efficiently approximate arbitrary unitary transformations and demonstrates strong robustness against noise in quantum data \cite{liang2021hybrid}. In this work, we apply Res-HQCNN to graph-structured quantum data in order to explore its potential quantum advantages, particularly leveraging the residual design to enhance learning efficiency and generalization across structured quantum datasets.

Assume Res-HQCNN has $L$ hidden layers. Each quantum perceptron represents single qubit. The number of nodes in layer $l$ is denoted as $m_l$. In this paper, we assume $m_{l-1}\leq m_l$ for $l=1,2,\cdots,L+1$. $l=0$ and $l=L+1$ represent the input and output layers respectively. It is also important to emphasize that the output state of layer $l$ is not the real input of layer $l+1$ in Res-HQCNN. According to the residual block structure in Fig.\ref{Figure1}(b), the right input state of layer $l+1$ is an addition of the output and input state of layer $l$. In Fig.\ref{figure2}, we present the model of Res-HQCNN with graph-structured quantum data. 

\begin{figure*}[ht]
\centering
\includegraphics[angle=0,width=1\linewidth]{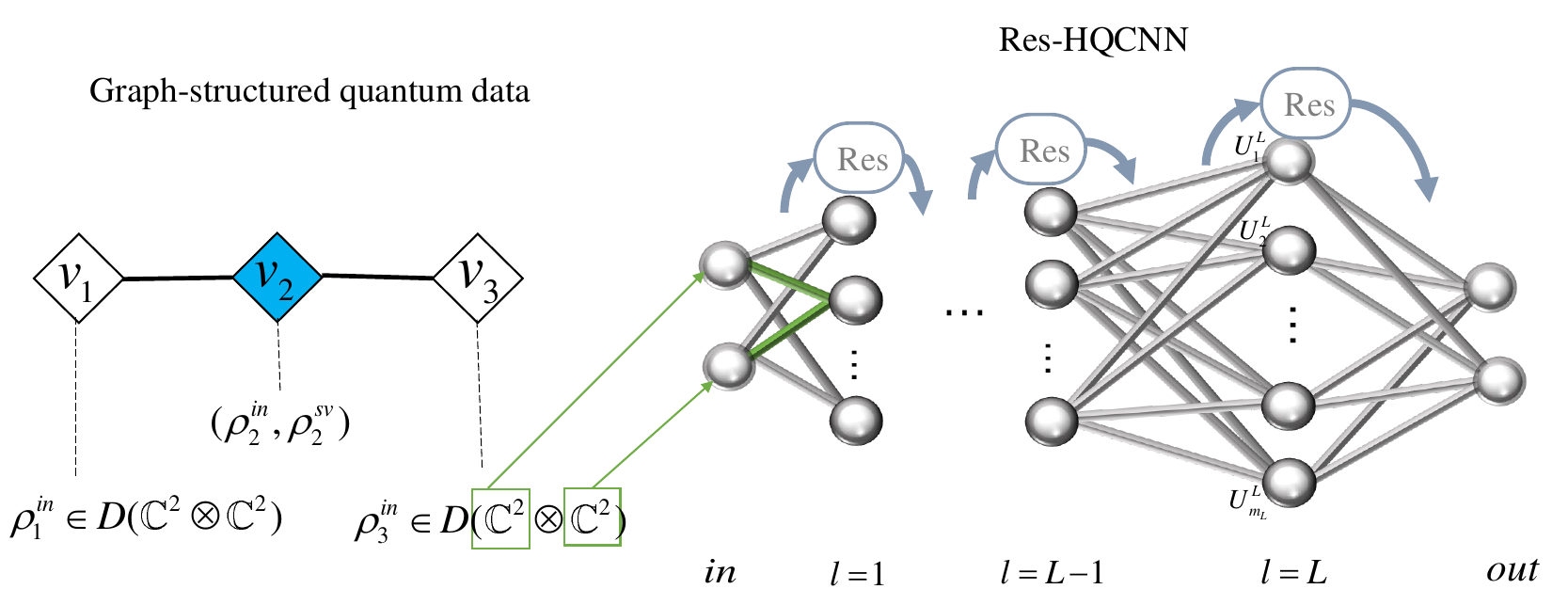}
\caption{\leftskip=0pt \rightskip=0pt plus 0cm  \textbf{Res-HQCNN and graph-structured quantum data}. ``Res'' represents the residual block structure in quantum concept in Fig. \ref{Figure1}(b). The green arrows mean that the dimension of the
input state $\rho_3^{in}$ ($\rho_1^{in}$) should match the nodes of the input layer in Res-HQCNN. The shaded vertex represents the supervised nodes with ideal output state $\rho_2^{sv}$.}
\label{figure2}
\end{figure*}

One may notice that the using of residual block structure in 
Fig.\ref{figure2} increases the trace value of the input state $\rho^{2_{in}}$. The trace value is not $1$, which means $\rho^{2_{in}}$ is not a density matrix at all. In other words, if we regard Res-HQCNN as a map $\mathcal{R}$, then $\mathcal{R}$ is not a completely positive map. However, in the experiment part, we will show that deeper Res-HQCNN can still learn graph-structured quantum data better than the state-of-the-art models.

\subsection{Cost function}

In order to investigate the training and testing performance of Res-HQCNN, we need to find suitable cost function for this special graph-structured quantum data.
Similar to the definition of cost function for graph-structured quantum data in Beer's model \cite{beer2023quantum}, we now define the cost function of Res-HQCNN for graph-structured quantum data in this paper. For convenience, we denote a map $\mathcal{R}$ as the acting of Res-HQCNN.

\subsubsection{Cost function for $S$ supervised part}

The first $S$ quantum states are supervised, meaning their corresponding labels are known. For these states, fidelity between the output state and the target state is commonly used as an information metric. Accordingly, the supervised component of the cost function for Res-HQCNN is defined as:
\begin{align}
C_{SV}=\frac{1}{2^tS}\sum_{x=1}^{S}\bra{\phi_{x}^{sv}}\mathcal{R}(\rho_{x}^{in})\ket{\phi_{x}^{sv}},
\end{align}
where $t$ is the number of residual block structure in a Res-HQCNN.

\subsubsection{Cost function for graph-based self-supervised part}

To capture more structural information from the graph $G$ and provide a more faithful embedding, the adjacency matrix $A$ can be incorporated into the cost function. Thus, the graph-based unsupervised component of the cost function can be defined as:
\begin{align}\label{cg}
  C_{G}=\frac{1}{2^t}\sum_{v,w\in V}[A]_{v,w}D_{HS}(\mathcal{R}(\rho_v),\mathcal{R}(\sigma_w)).  
\end{align}
Here $D_{HS}(\rho,\sigma)=\trace[(\rho-\sigma)^2]$ is the the Hilbert-Schmidt distance \cite{HSozawa2000entanglement},
$[A]_{v,w}$ is the matrix element of adjacency matrix $A$ corresponding to the vertices $v$ and $w$, and $t$ is the number of residual block structure in a Res-HQCNN. During the learning process of Res-HQCNN, we will minimize the cost function $C_G$ when $\mathcal{R}(\rho_v) $and $\mathcal{R}(\sigma_w)$ are information-theoretic close.

\subsubsection{The full cost function for training}

The full cost function can be defined as a combination of supervised part and graph-based self-supervised part:
\begin{align}\label{fullcost}
 C_{SV+G}=C_{SV}+\gamma C_{G},
\end{align}
where $\gamma$ is called as graph part control factor with $\gamma \leq 0$. 
Therefore, throughout the training process, we aim to maximize $C_{SV+G}$
to effectively exploit the structural information encoded in the graph 
$G$.

\subsubsection{The cost function for testing}

After training Res-HQCNN on graph-structured data using Eq.(\ref{fullcost}), we evaluate its performance by computing the testing loss, which is given by:
\begin{align}\label{tesingloss}
C_{USV}=\frac{1}{2^t(N-S)}\sum_{x=S+1}^{N}\bra{\phi_{x}^{test}}\mathcal{R}(\rho_{x}^{in})\ket{\phi_{x}^{test}}.
\end{align}
where $t$ is the number of residual block structure in a Res-HQCNN.

\subsection{The training algorithm}

Based on the theoretical and architectural preparations above, we now proceed to describe the training algorithm for Res-HQCNN in the context of graph-structured quantum data. Given that the training data may include both supervised samples with and without graph structure, we present the corresponding training procedures separately, detailing the cases with graph-based regularization and those without.

\subsubsection{Update unitary matrix for supervised training data without graph}

Since this case has been previously investigated in detail in Ref.\cite{liang2021hybrid}, we now briefly summarize the key expressions for the Res-HQCNN architecture in the case of a single hidden layer.

The unitaries matrix of Res-HQCNN can be updated via
\begin{align}\label{updateunitary}
 U_{j}^{l}(s+\epsilon)=e^{i\epsilon K_{j}^{l}(s)}U_{j}^{l}(s).
\end{align}
in which $s$ is the step and usually updated with $s=s+\epsilon$, $K_{j}^{l}$ is the parameters matrix. 

When $l=1$, we can compute an analytical expression of $K_{j}^{l}$ in Eq.{(\ref{updateunitary})},
\begin{align}\label{k1}
K_{j}^{l}(s)=\eta \frac{2^{m_{l-1}}}{S}\sum_{x=1}^{S}\trace_{rest}M_{j}^{l}.
\end{align}
Here the trace is over all qubits of Res-HQCNN not affected by $U_{j}^{l}$, and $\eta$ is the learning rate, $S$ is the number of supervised training data.
The matrix $M_{j}^{l}$ in Eq.{(\ref{k1})} is made up of two parts of the commutator:
\begin{align}\label{M1}
 M_{j}^{l}(s)=&[U_{j}^{l}(s)\cdots U_{1}^{l}(s)\left(\rho_{x}^{l_{in}}(s)\otimes \ket{0\cdots0}_{l}\bra{0\cdots0}\right)
 {U_{1}^{l}}^{\dagger}(s)\cdots 
{U_{j}^{l}}^{\dagger}(s),\nonumber\\
&{U_{j+1}^{l}}^{\dagger}(s)\cdots {U_{m_{out}}^{out}}^{\dagger}(s)\left(Id(m_{l-1})\otimes \ket{\phi_{x}^{out}}\bra{\phi_{x}^{out}}\right)
U_{m_{out}}^{out}(s)\cdots U_{j+1}^{l}(s)].
\end{align}

When $l=2$ (the output layer), the parameter matrix $K_{j}^{l}$ in Eq.{(\ref{updateunitary})} can be computed as 
\begin{align}\label{k2}
K_{j}^{l}(s)=\eta \frac{2^{m_{l-1}}}{S}\sum_{x=1}^{S}\trace_{rest}(M_{j}^{l}+N_{j}^{l}),
\end{align}
where $M_{j}^{l}$ is in Eq.(\ref{M1}) and
\begin{align}\label{N2}
 N_{j}^{2}(s)=&[U_{j}^{2}(s)\cdots U_{1}^{2}(s)(\rho_{x}^{in}(s)\otimes \ket{0\cdots0}_{\Delta m_1}\bra{0\cdots0}
\otimes\ket{0\cdots0}_{2}\bra{0\cdots0})
 {U_{1}^{2}}^{\dagger}(s)\cdots {U_{j}^{2}}^{\dagger}(s),\nonumber\\
&{U_{j+1}^{2}}^{\dagger}(s)\cdots {U_{m_{2}}^{2}}^{\dagger}(s)
\left(Id(m_{1})\otimes \ket{\phi_{x}^{out}}\bra{\phi_{x}^{out}}\right)
U_{m_{2}}^{2}(s)\cdots U_{j+1}^{2}(s)].
\end{align}

If Res-HQCNN has more than two hidden layers,
we can calculate its update parameters matrices $K_{j}^{l}(s)$ with similar computational method in Ref.\cite{liang2021hybrid}. No fixed formula for $K_{j}^{l}(s)$ has been found up to now. It changes with the depth of neural network $l$.

\subsubsection{Update unitary matrix for graph-structured quantum data}

Based on Eq.{(\ref{updateunitary})}, we first compute the parameter matrix $K_{j}^{l}(s)$ for graph-structured quantum data in Res-HQCNN with one hidden layer. 

\begin{theorem}\label{thm1}
 The update matrix for a Res-HQCNN with one hidden layer trained with a graph structure between output states ${\rho_v^{out},\rho_w^{out}}$  encoded with a adjacency matrix $[A]_{vw}$ is
\begin{align}\label{knew1}
K_{j}^{l}(s)=&\eta {2^{m_{l-1}+1}}i\sum_{v\sim w}\trace_{rest}(M_{j\{v,w\}}^{l}(s)),\qquad l=1,\ j=1,2,\cdots,m_l;
\end{align}
\begin{align}\label{knew2}
K_{j}^{l}(s)=&\eta {2^{m_{l-1}+1}}i\sum_{v\sim w}
\trace_{rest}(M_{j\{v,w\}}^{l}(s)+
N_{j\{v,w\}}^{l}(s)),\ 
l=2,\ j=1,2,\cdots,m_l.
\end{align}
where
\begin{align}\label{Mnew1}
 M_{j\{v,w\}}^{l}(s)=&[U_{j}^{l}(s)\cdots U_{1}^{l}(s)((\rho_{v}^{l_{in}}-\rho_{w}^{l_{in}})(s)\otimes 
 \ket{0\cdots0}_{l}\bra{0\cdots0}){U_{1}^{l}}^{\dagger}(s)\cdots 
{U_{j}^{l}}^{\dagger}(s),\nonumber\\
&{U_{j+1}^{l}}^{\dagger}(s)\cdots {U_{m_{out}}^{out}}^{\dagger}(s)
(Id(m_{l-1})\otimes 
(\rho_{v}^{out}-\rho_{w}^{out}))
U_{m_{out}}^{out}(s)\cdots U_{j+1}^{l}(s)].
\end{align}
\begin{align}\label{Nnew1}
 N_{j}^{2}(s)=&[U_{j}^{2}(s)\cdots U_{1}^{2}(s)((\rho_{v}^{in}-\rho_{w}^{in})(s)\otimes 
 \ket{0\cdots0}_{\Delta m_1}\bra{0\cdots0}\otimes\ket{0\cdots0}_{2}\bra{0\cdots0})\nonumber\\
 &{U_{1}^{2}}^{\dagger}(s)\cdots {U_{j}^{2}}^{\dagger}(s),
{U_{j+1}^{2}}^{\dagger}(s)\cdots {U_{m_{2}}^{2}}^{\dagger}(s)
\left(Id(m_{1})\otimes(\rho_{v}^{out}-\rho_{w}^{out})\right)
U_{m_{2}}^{2}(s)\cdots U_{j+1}^{2}(s)].
\end{align}
\end{theorem}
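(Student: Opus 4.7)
The plan is to derive Theorem~\ref{thm1} by linearizing the graph cost $C_G$ in the parameters and choosing $K_j^l(s)$ so that the multiplicative update $U_j^l(s+\epsilon)=e^{i\epsilon K_j^l(s)}U_j^l(s)$ realizes gradient descent on $C_G$. The starting point is the quadratic form of the Hilbert--Schmidt distance, which gives
$$\delta D_{HS}\bigl(\mathcal{R}(\rho_v),\mathcal{R}(\rho_w)\bigr)=2\,\trace\bigl[(\mathcal{R}(\rho_v)-\mathcal{R}(\rho_w))\,(\delta\mathcal{R}(\rho_v)-\delta\mathcal{R}(\rho_w))\bigr].$$
This identity is what turns the output-side observable in $M_{j\{v,w\}}^l$ into the Hermitian difference $\rho_v^{out}-\rho_w^{out}$, and it also accounts for the factor of $2$ that upgrades the prefactor $2^{m_{l-1}}$ of Eq.~(\ref{k1}) to $2^{m_{l-1}+1}$ in Eqs.~(\ref{knew1})--(\ref{knew2}).

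Next I would compute $\delta\mathcal{R}(\rho_v)$ for a perturbation of $U_j^l$ by following the strategy of \cite{liang2021hybrid} for the supervised case. Expanding $e^{i\epsilon K_j^l}U_j^l$ to first order, substituting into the layer-wise definition of $\mathcal{R}$ from Section~\ref{44}, and applying cyclicity of trace together with the partial-trace adjoint identity $\trace\bigl(K\,\trace_{\mathrm{rest}}X\bigr)=\trace\bigl((K\otimes Id)\,X\bigr)$, the first-order variation of $C_G$ collapses to a single trace against $K_j^l$, summed over edges $v\sim w$ with weights $[A]_{v,w}$. Pushing the adjoints of the ``late'' gates $U_{j+1}^l,\ldots,U_{m_{out}}^{out}$ onto the output-side factor $Id(m_{l-1})\otimes(\rho_v^{out}-\rho_w^{out})$ and leaving the ``early'' gates $U_1^l,\ldots,U_j^l$ acting on the input-side piece $(\rho_v^{l_{in}}-\rho_w^{l_{in}})\otimes\ket{0\cdots0}_l\bra{0\cdots0}$ produces exactly the commutator of Eq.~(\ref{Mnew1}). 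The factor $i$ multiplying $K_j^l$ is inherited from $\frac{d}{d\epsilon}\bigl(e^{i\epsilon K_j^l}\bigr)\big|_{\epsilon=0}=iK_j^l$ and also guarantees that the anti-Hermitian commutator $M_{j\{v,w\}}^l$ yields a Hermitian update, as required for unitarity.

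What is genuinely new relative to the supervised derivation is the output-layer case $l=2$. Here the residual block injects $\rho_v^{2_{in}}=\rho_v^{1_{out}}+\rho_v^{1_{in}}\otimes\ket{0}\bra{0}$ into the layer-2 perceptrons, so after perturbing $U_j^2$ the linearization splits additively into two commutators: one driven by the layer-1 output $\rho_v^{1_{out}}$, which reassembles into $M_{j\{v,w\}}^2$, and one driven by the skip copy $\rho_v^{1_{in}}\otimes\ket{0}\bra{0}$, which after padding with the dimension-matching ancilla $\ket{0\cdots0}_{\Delta m_1}\bra{0\cdots0}$ is precisely the extra $N_j^2$ term of Eq.~(\ref{Nnew1}). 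I expect the main obstacle to be the bookkeeping of ancilla subsystems and the correct placement of $Id(m_{l-1})$ inside the partial traces, particularly in the $l=2$ case where the two contributions share the output-side factor $\rho_v^{out}-\rho_w^{out}$ but differ in the input-side state and in the relative position of the residual ancilla $\ket{0\cdots0}_{\Delta m_1}\bra{0\cdots0}$ with respect to the output ancilla $\ket{0\cdots0}_2\bra{0\cdots0}$; verifying that these pieces \emph{add} rather than cancel after taking the gradient of the squared difference in $D_{HS}$ is the delicate step.
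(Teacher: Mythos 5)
Your plan follows essentially the same route as the paper's appendix proof: first-order Taylor expansion of the exponential updates $e^{i\epsilon K_j^l}U_j^l$, propagation of the perturbation through the network (with the residual skip copy $\rho_v^{1_{in}}\otimes\ket{0}\bra{0}$ generating the extra $N_j^2$ commutator at the output layer), cyclicity of trace to isolate the commutators $M_{j\{v,w\}}^l$, and the derivative of the quadratic Hilbert--Schmidt distance supplying both the observable $\rho_v^{out}-\rho_w^{out}$ and the extra factor of $2$ in the prefactor. The only step you leave implicit is that the closed form $K_j^l\propto\trace_{rest}(M_{j\{v,w\}}^l)$ with the $2^{m_{l-1}}$ normalization is obtained in the paper by a Lagrange-multiplier extremization over the Pauli-basis coefficients of $K_j^l$, but since you inherit that normalization from the supervised case this is a presentational rather than a substantive difference.
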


The proof of Theorem \ref{thm1}, provided in the Appendix, begins with the definition of the derivative function and utilizes several properties of the trace operator. When the number of hidden layers in Res-HQCNN exceeds two, the corresponding parameter update matrices can be derived using a similar computational approach as outlined in the proof of Theorem \ref{thm1}.

\subsubsection{Full update matrix}

Based on the mathematical expressions derived above, we can get the full update matrix for supervised vertices with or without graph. For three-layer Res-HQCNN, combining Eq.(\ref{k1}) with Eq.(\ref{knew1}), we obtain the update parameter matrix $K_{j}^{l}(s)$ for the training data in Eq.(\ref{data}) of Res-HQCNN with one hidden layer when $l=1$:
\begin{align}\label{ktotal}
K_{j}^{l}(s)=&\eta \frac{2^{m_{l-1}}i}{S}\sum_{x=1}^{S}\trace_{rest}M_{j\{x\}}^{l}
+\gamma [\eta {2^{m_{l-1}+1}}i\sum_{v\sim w}\trace_{rest}(M_{j\{v,w\}}^{l}(s))], \nonumber\\
&l=1,\ j=1,2,\cdots,m_l;
\end{align}
Here graph part control factor $\gamma \leq 0$.

Before presenting the following experiments, a deeper theoretical perspective may help elucidate the underlying mechanisms contributing to the performance improvement of Res-HQCNN. We outline two key theoretical components: the role of residual connections in enhancing trainability, and the influence of graph-based information on generalization. In Res-HQCNN, each layer’s output is defined as
\begin{align}\label{add1}
\rho^{l_{out}}=U^{l}\rho^{l_{in}}{U^{l}}^\dagger +\rho^{l_{in}},
\end{align}
in which $U^{l}=U_{m_l}^{l}\cdots U_1^{l}$ represents the quantum perceptrons of the layer $l$ in Fig.\ref{figure2}. 
This additive residual structure introduces an identity path through which gradient information can flow directly, improving convergence in deeper networks.
These insights align with recent work on quantum residual networks \cite{kashif2024resqnets,wen2024enhancing}, which suggests that residual blocks help avoid overparameterization while maintaining the model’s capacity to approximate general unitaries.
The second source of improvement in Res-HQCNN lies in its use of graph-based regularization. The graph-structured loss function $C_G$ in Eq.(\ref{cg}) encourages the output states $\mathcal{R}(\rho_v)$ and $\mathcal{R}(\rho_w)$ of adjacent nodes $v,w \in V$ to remain close in Hilbert-Schmidt distance. This acts as an inductive bias that enforces local smoothness over the graph structure, constraining the hypothesis space of the quantum model and guiding it toward solutions that respect known correlations in the data.
Together, residual learning and graph-based regularization work synergistically. Residual blocks make the model deeper and more trainable; graph information stabilizes the learning process by incorporating known structural correlations. This dual design enables Res-HQCNN to achieve faster convergence, higher fidelity, and improved robustness theoretically.
A complete theoretical formalism that fully quantifies this synergy remains an open problem, but our analysis provides a conceptual framework that helps explain the model’s observed performance advantages.

\section{Experimental Methods and results}\label{5}

In this section, we conduct a comprehensive set of experiments to evaluate the learnability of Res-HQCNN on graph-structured quantum data. All experiments are performed on a FusionServer Pro G5500. The training data used follow the format specified in Eq.(\ref{data}). Aligned with the two research objectives outlined earlier, we first design comparative experiments to examine the effectiveness of incorporating graph structure information. Subsequently, we increase the depth of the Res-HQCNN model to investigate whether residual learning enhances its ability to learn from graph-structured quantum data.

Since the QNN model proposed in 2020 is both universal and efficient, exhibiting no barren plateau phenomenon \cite{beer2020training}, and Res-HQCNN is developed based on this universal QNN architecture, the primary structural difference lies in the addition of residual connections.
In contrast, other hybrid quantum-classical models, quantum graph neural networks, or quantum convolutional neural networks differ significantly in architecture and design principles from Res-HQCNN.
Therefore, using the universal QNN in Ref.\cite{beer2020training} as a baseline is well-justified for evaluating the impact of residual connections and the incorporation of graph information.
In the following figures, the solid lines represent the results from  Res-HQCNN model in this paper while the dashed ones corresponding to the universal QNN model.
For convenience, we apply a 1-dimensional list of natural numbers to refer to the number of perceptrons in the corresponding layer. 
If a residual block structure acts on the hidden layers of Res-HQCNN, we add a tilde on top of the
natural numbers. For example, a 1-2-1 QNN in Ref.\cite{beer2023quantum} can be denoted as $[1,2,1]$, and a 1-2-1 Res-HQCNN in this paper is denoted as $[1,\tilde{2},1]$.  

We will use the training and testing losses to numerically evaluate the performance of Res-HQCNN model. Since the input and output of Res-HQCNN are quantum states, the fidelity and the Hilbert-Schmidt distance used in the definition of cost function are suitable for the specific structure of Res-HQCNN, just like the key performance indicators (KPIs) of mean absolute percentage error (MAPE), mean squared error (MSE), mean absolute error (MAE) in classical regression predictions.

Moreover, we adopt fidelity-based cost functions, which measure how close the output quantum state to the target state. Since fidelity lies in $[0,1]$, higher values indicate better learning performance. Consequently, plots of training and testing loss in this paper represent fidelity scores, not conventional loss functions. This choice aligns with common practice in state-of-the-art quantum machine learning research \cite{beer2023quantum,liang2021hybrid,beer2020training}, but may differ from classical machine learning terminology where lower loss typically indicates improvement.

\begin{figure*}[htbp]
\centering
\subfigure[Line]{
\includegraphics[width=5cm]{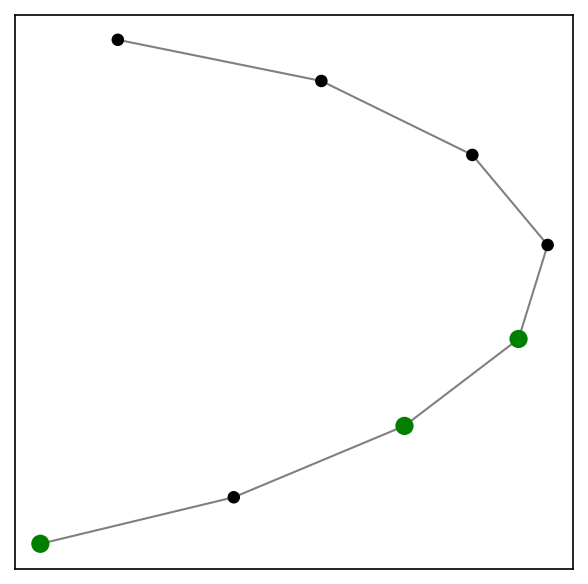}
}
\subfigure[Connected clusters]{
\includegraphics[width=5cm]{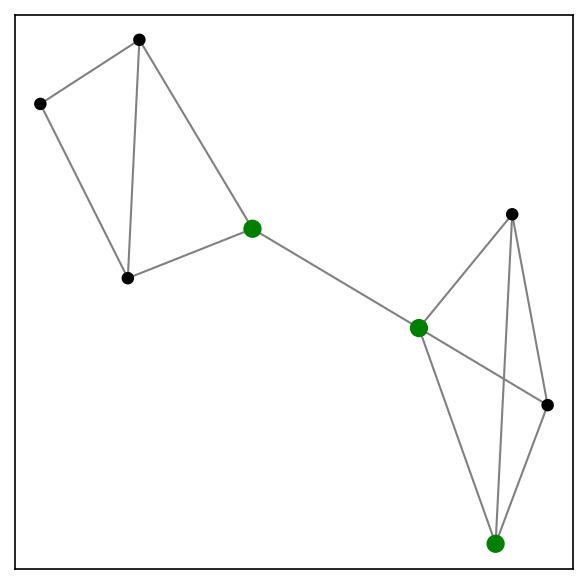}
}
\caption{ \textbf{The output states of training data $(\rho_{v_j},p_{v_j})$ in Eq.(\ref{data}) with $j=1,2,\cdots,8$ comprise a graph with clusters or line. The green shades vertices are used for training}}
\label{Figure3}
\end{figure*} 

\begin{figure*}[htbp]
\centering
\subfigure[Line]{
\includegraphics[width=6cm]{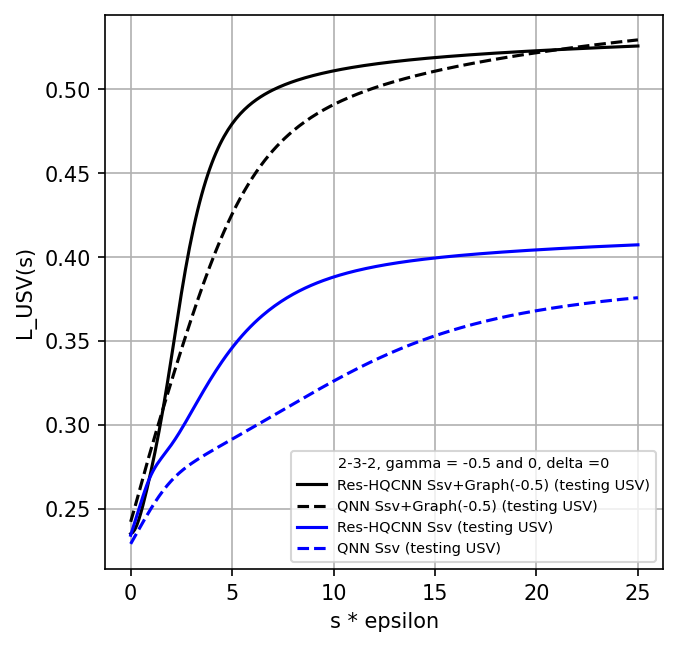}
}
\subfigure[Connected clusters]{
\includegraphics[width=6cm]{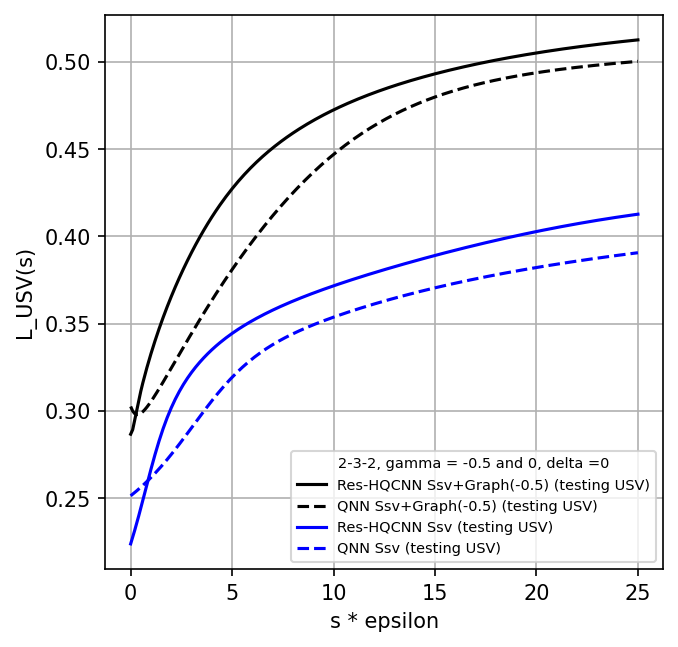}
}
\caption{ \textbf{Numerical results of $[2,\tilde{3},2]$(solid) and $[2,3,2]$(dashed) for $8$ training data with $3$ supervised}. Testing loss during 250 training epochs. The black lines are with graph($\gamma=-0.5$), and the blue lines are without graph($\gamma=0$). (a)  The $3$ supervised data corresponds to green circles of line in Fig.\ref{Figure3}(a); (b) The $3$ supervised data corresponds to green circles of connected clusters in Fig.\ref{Figure3}(b).}
\label{Figure4}
\end{figure*}

In fact, $N$ quantum training data can comprise many different kinds of graphs, such as circles, lines and connected clusters. In this experiments, we randomly choose the number of training pairs as $8$, and choose the graphs in line and connected clusters, which is shown in Fig.\ref{Figure3}. The vertices shaded in green are used for training, and the rest are used for testing.

\subsection{Elementary test with one hidden layer}

As an elementary test, we choose Res-HQCNN $[2,\tilde{3},2]$ and the QNN $[2,3,2]$ to learn graph-structured data. 
The testing loss values in Eq.(\ref{tesingloss}) for $[2,\tilde{3},2]$ and $[2,3,2]$ are shown in Fig.\ref{Figure4} in terms of line and connected clusters in Fig.\ref{Figure3}.
Here $\gamma$ is a negative number, which is the weight of the graph part in the full training cost. Here set the step size $\epsilon=0.01$.

In Fig.\ref{Figure4}, when the number of training pairs is $3$ and $\gamma=-0.5$, all the black curves are consistently above the blue ones, indicating that incorporating graph information enhances the learning performance of the networks. Recall that the solid lines correspond to the Res-HQCNN architecture $[2,\tilde{3},2]$, and the dashed lines are the ones from the universal QNN $[2,3,2]$. All the solid lines are higher than the dashed ones, which reflects the residual approach can help the network learn the information of graph better. Additionally, by comparing Fig.\ref{Figure4} (a) and (b), we observe that both models $[2,\tilde{3},2]$ and $[2,3,2]$ 
converge more quickly on line-structured data than on clustered graphs. Moreover, the gap between the black and blue lines is more pronounced in the line structure, suggesting that training data with line graph topology may be better suited for learning with Res-HQCNN.

When the number of training pairs varies, the results reveal interesting trends. As shown in Fig.\ref{Figure5}, we examine how performance is affected by the number of supervised quantum states. For line-structured data, all black data points are consistently higher than the brown ones across different numbers of supervised states. This trend aligns with the behavior of the black solid and dashed lines in Fig.\ref{Figure5}, further confirming the advantage of incorporating graph information. But for connected clusters, when the number of supervised states is $1$, $4$, $6$ and $7$, the brown points are higher. Based on this generalization analysis, we conclude that the Res-HQCNN architecture with a single hidden layer appears to be better suited for datasets with line-like graph structures, rather than those with connected-cluster topologies.

\begin{figure*}[htbp]
\centering
\subfigure[Line]{
\includegraphics[width=6cm]{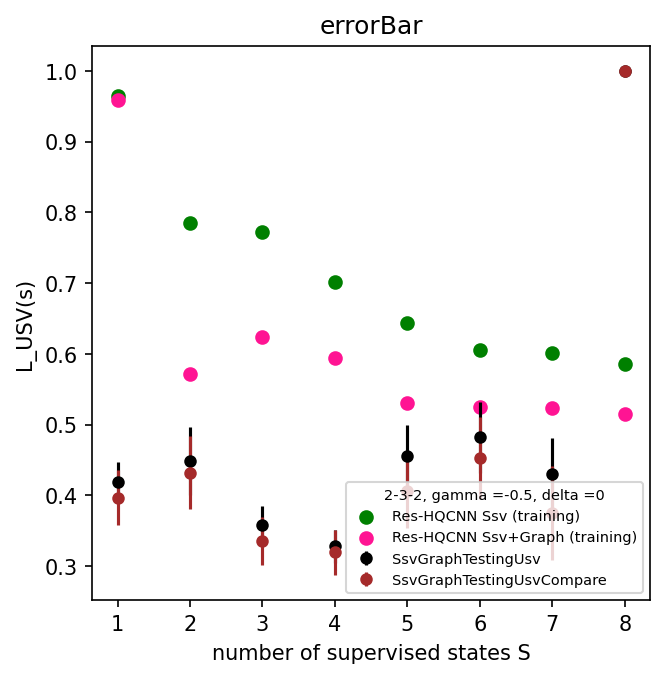}
}
\subfigure[Connected clusters]{
\includegraphics[width=6cm]{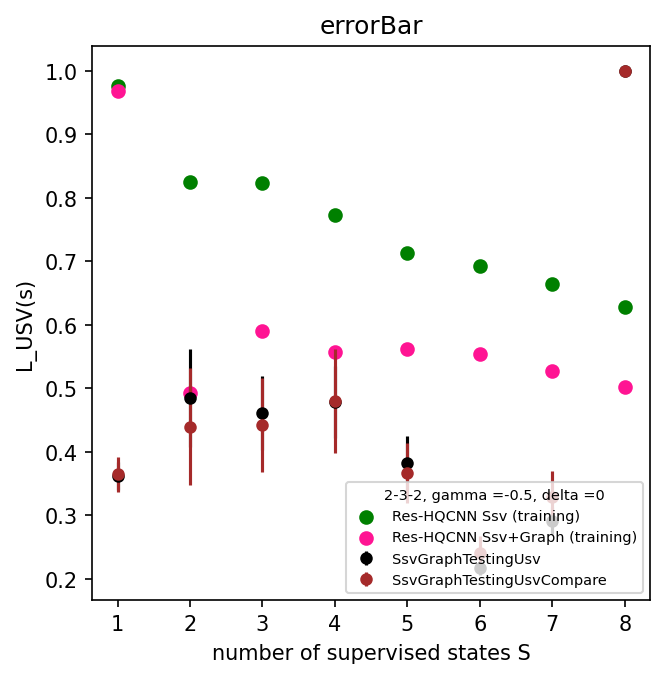}
}
\caption{ \textbf{The training and testing value of different number of supervised states $S$ of $[2,\tilde{3},2]$ and $[2,3,2]$ with graph.} The black data points are the results of Res-HQCNN with graph for testing, while the brown data points are the ones for universal QNN with graph. Each data point is an average over $8$ independent training attempts, and the error bars are one standard error of the mean. Here we also record the training values for Res-HQCNN with or without graph in green and pink data points.}
\label{Figure5}
\end{figure*} 

\begin{figure*}[htbp]
\centering
\subfigure[Line]{
\includegraphics[width=6cm]{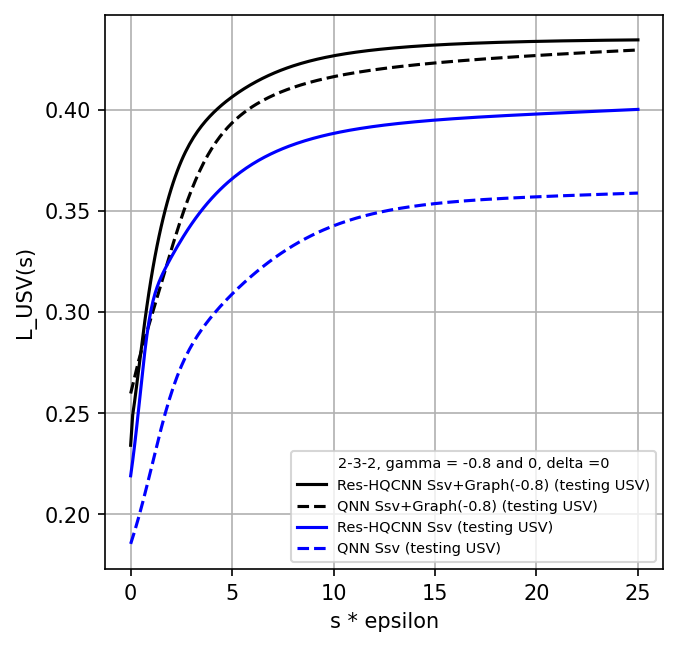}
}
\subfigure[Error bar]{
\includegraphics[width=6cm]{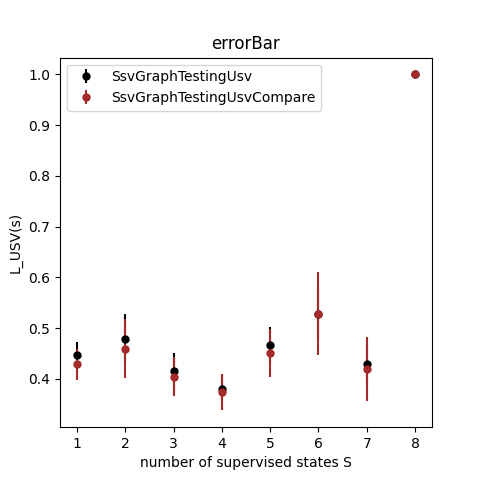}
}
\caption{ \textbf{Numerical results of $[2,\tilde{3},2]$(solid) and $[2,3,2]$(dashed) for $8$ training data with $3$ supervised}. Testing loss during 250 training epochs. The black lines are with graph($\gamma=-0.8$), and the blue lines are without graph($\gamma=0$). }
\label{Figure6}
\end{figure*} 

Here we also report the training results of Res-HQCNN with or without graph in Fig.\ref{Figure5}. As the number of supervised states increases, the training values represented by the green and pink points decrease, which is expected due to the definition of $C_{SV}$. Since the graph part control factor $\gamma$ is negative in the mathematical expression of Eq.(\ref{fullcost}), the pink data points are always lower than the green ones. However, from these training values alone, it is difficult to draw meaningful conclusions related to our two primary objectives. Therefore, in the following sections, we focus on the testing results to better evaluate the impact of graph information and residual connections on the model’s generalization performance.

What is the effect of changing the graph weight parameter $\gamma$? We present the results for $\gamma=-0.8$ in Fig.\ref{Figure6}. When using $8$ training samples with $3$ supervised states, the convergence value does not exceed $0.45$ in Fig.\ref{Figure6}(a), which is smaller than the one in Fig.\ref{Figure4}(a). Moreover, as shown by the error bars in Fig.\ref{Figure6}(b), the difference between the black and brown points is also smaller than those in Fig.\ref{Figure5}(a) with $\gamma=-0.5$. So $-0.5$ appears to be a more appropriate choice for $\gamma$ to effectively evaluate the strengths and weaknesses of Res-HQCNN with one hidden layer on line-structured datasets.

\subsection{The network with two hidden layers}

\begin{figure*}[htbp]
\centering
\subfigure[Line]{
\includegraphics[width=6cm]{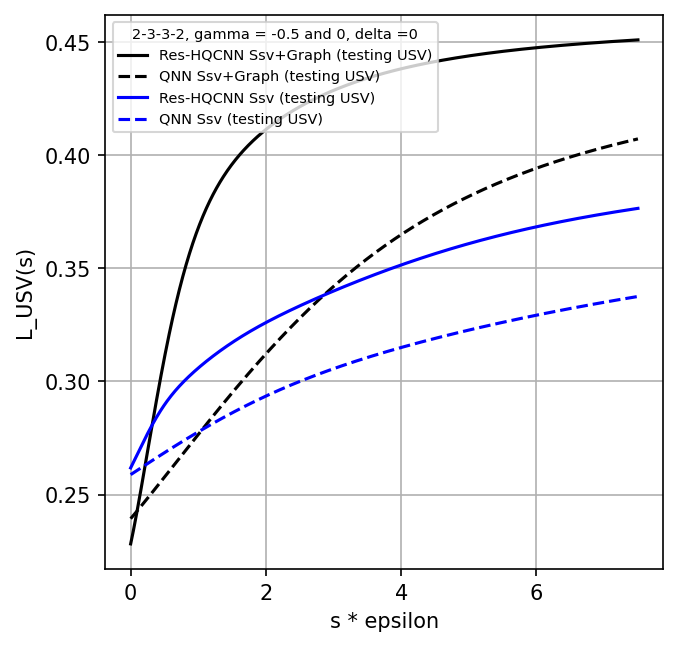}
}
\subfigure[Connected clusters]{
\includegraphics[width=6cm]{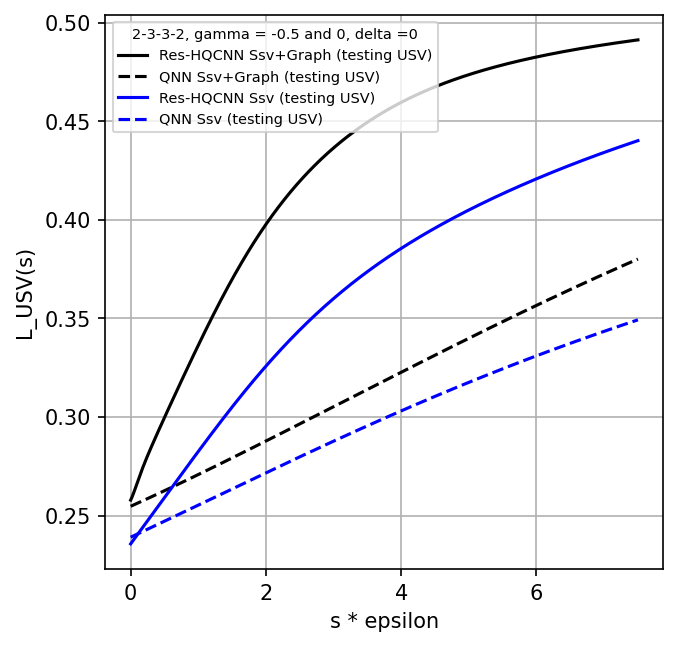}
}
\subfigure[Line]{
\includegraphics[width=6cm]{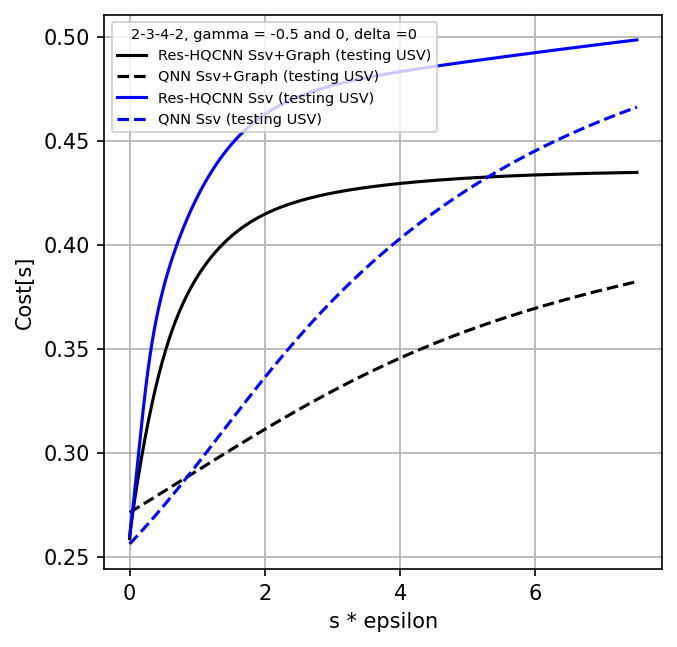}
}
\subfigure[Connected clusters]{
\includegraphics[width=6cm]{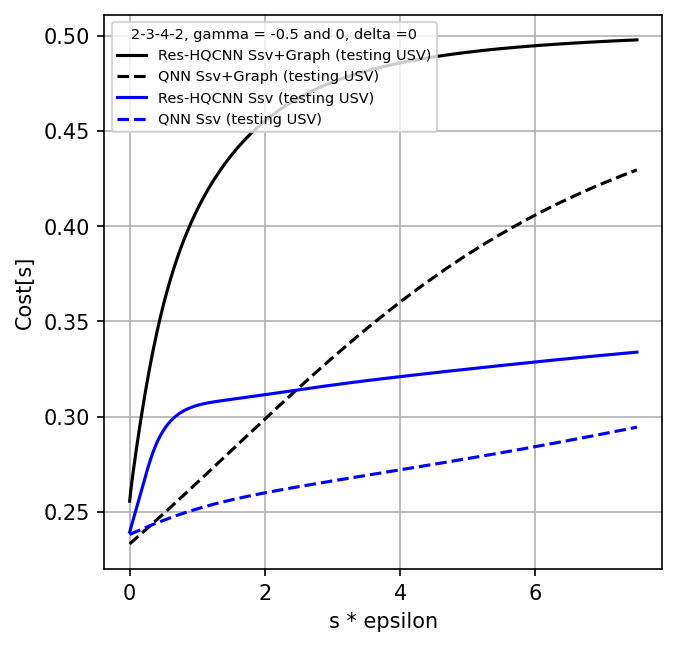}
}
\caption{ \textbf{Numerical results of Res-HQCNN with two hidden layers for $8$ training data with $3$ supervised in $750$ training epochs}.  The black lines are with graph($\gamma=-0.5$), while the blue lines are without graph($\gamma=0$).}
\label{Figure7}
\end{figure*} 

In Fig.\ref{Figure7}(a) and (b), we present the testing loss for the architectures $[2,\tilde{3},\tilde{3},2]$ and $[2,3,3,2]$ using $8$ training data with $3$ supervised. The incorporation of residual blocks in two hidden layers leads to improved performance on both line-structured and connected-cluster data, as evidenced by the comparison between the solid and dashed lines. Additionally, the inclusion of graph information in the input data results in better learning outcomes, as shown by the higher performance of the black lines compared to the blue lines.

In Fig.\ref{Figure7}(c) and (d), we also conduct the experiment for $[2,\tilde{3},\tilde{4},2]$ and $[2,3,4,2]$.
In Fig.\ref{Figure7}(c), with $8$ training samples and $3$ supervised states, the Res-HQCNN $[2,\tilde{3},\tilde{4},2]$ with graph information does not achieve the good performance. This may be attributed to the increased number of quantum perceptrons, which adds complexity, as well as to the randomness in the input training data. As shown in Fig.\ref{Figure3}, the data are randomly sampled from a normal distribution in each trial, and in some cases, this randomness may lead to suboptimal results, such as those seen in Fig.\ref{Figure7}(c). However, when the number of supervised samples is increased to $6$, as shown in Fig.\ref{Figure7}(d), $[2,\tilde{3},\tilde{4},2]$ converges more quickly and reaches a slightly higher final value. These results suggest that when modifying the architecture of Res-HQCNN, it is important to carefully tune the associated parameters in order to achieve optimal performance.

We also show the testing value of different number of supervised states $S$ of $[2,\tilde{3},\tilde{3},2]$ and $[2,3,3,2]$ in Fig.\ref{Figure8}.
We observe that for the network with two hidden layers, all the black data points consistently outperform the brown ones. Moreover, the gap between the black and brown points is larger than that seen in Fig.\ref{Figure5}. This indicates that the Res-HQCNN architecture $[2,\tilde{3},\tilde{3},2]$ demonstrates greater stability and more effectively leverages graph information compared to the shallower model $[2,\tilde{3},2]$. The residual block structure gives a more pronounced impact in deeper networks, which is consistent with the theoretical analysis presented earlier.
\begin{figure*}[htbp]
\centering
\subfigure[Line]{
\includegraphics[width=6cm]{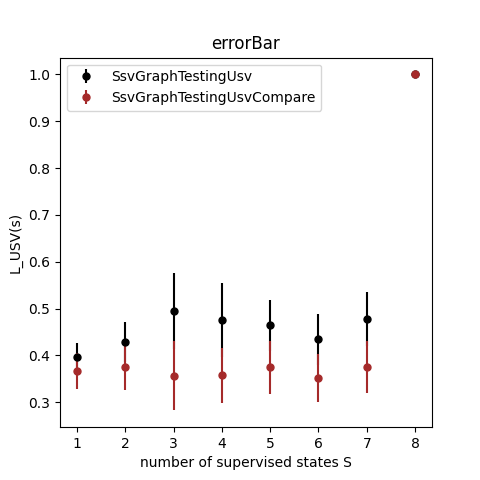}
}
\subfigure[Connected]{
\includegraphics[width=6cm]{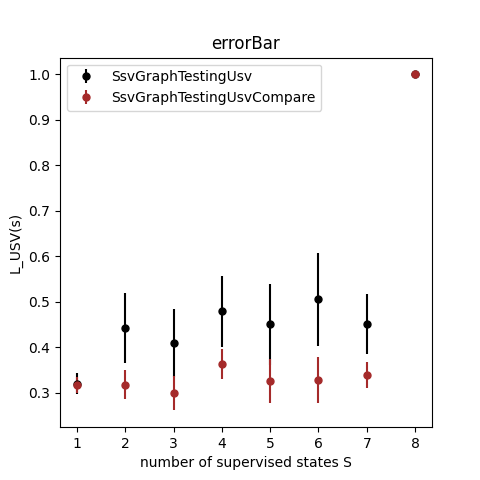}
}
\caption{ \textbf{The testing value of different number of supervised states $S$ of $[2,\tilde{3},\tilde{3},2]$ and $[2,3,3,2]$ with graph.} }
\label{Figure8}
\end{figure*} 

\begin{figure*}[htbp]
\centering
\subfigure[Line]{
\includegraphics[width=6cm]{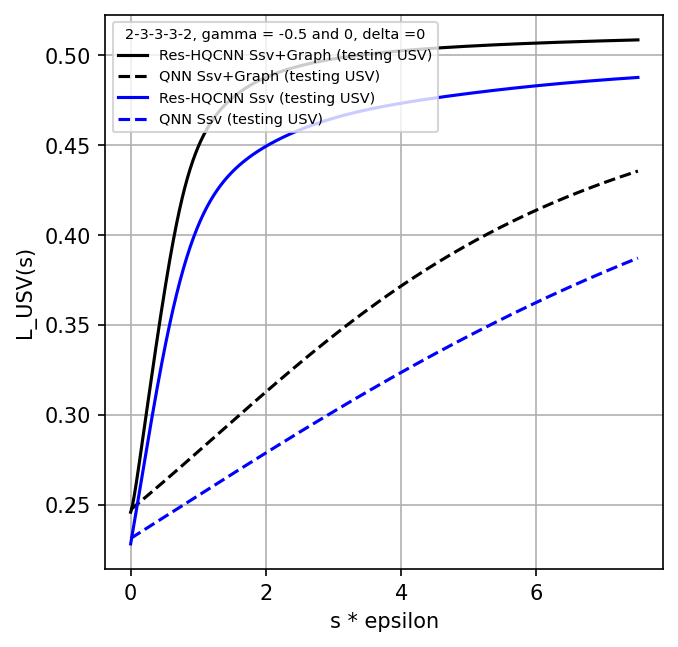}
}
\subfigure[Connected clusters]{
\includegraphics[width=6cm]{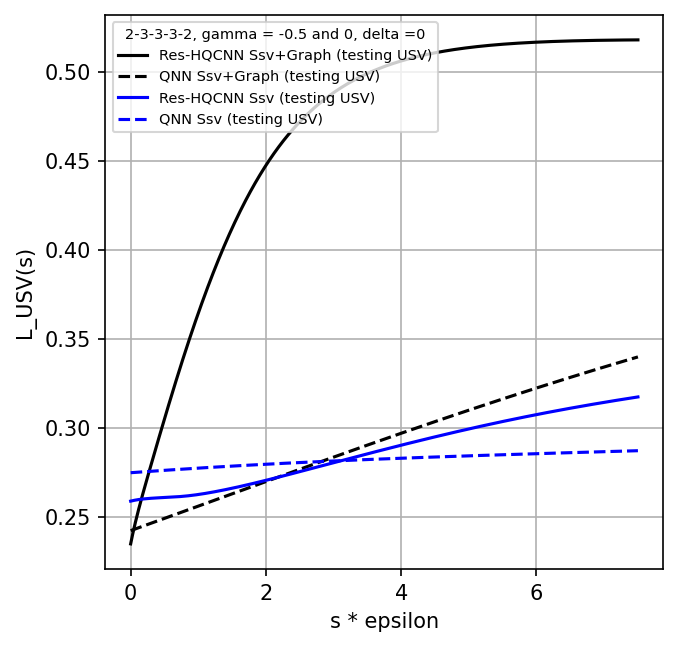}
}
\subfigure[Line]{
\includegraphics[width=6cm]{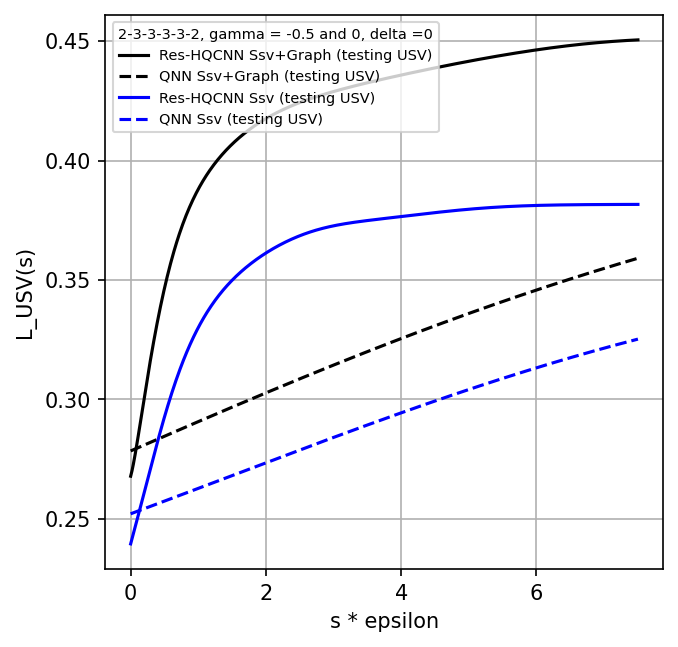}
}
\subfigure[Connected clusters]{
\includegraphics[width=6cm]{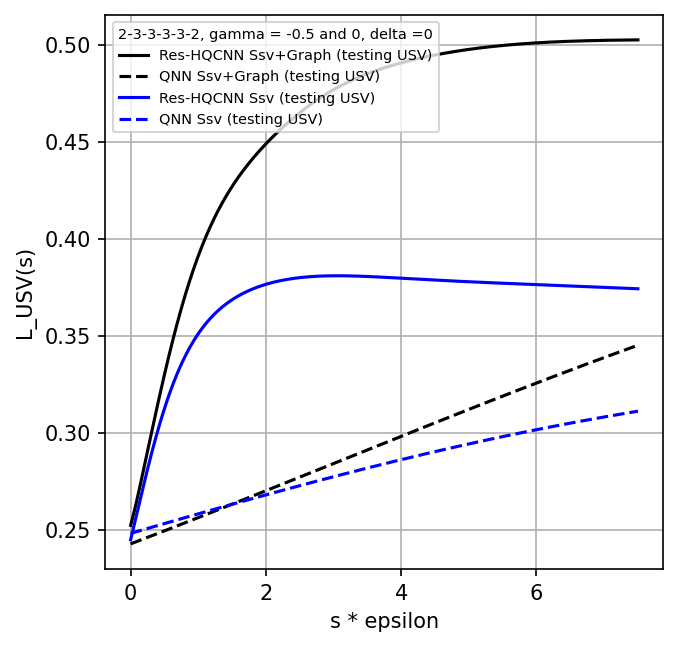}
}
\subfigure[Line]{
\includegraphics[width=6cm]{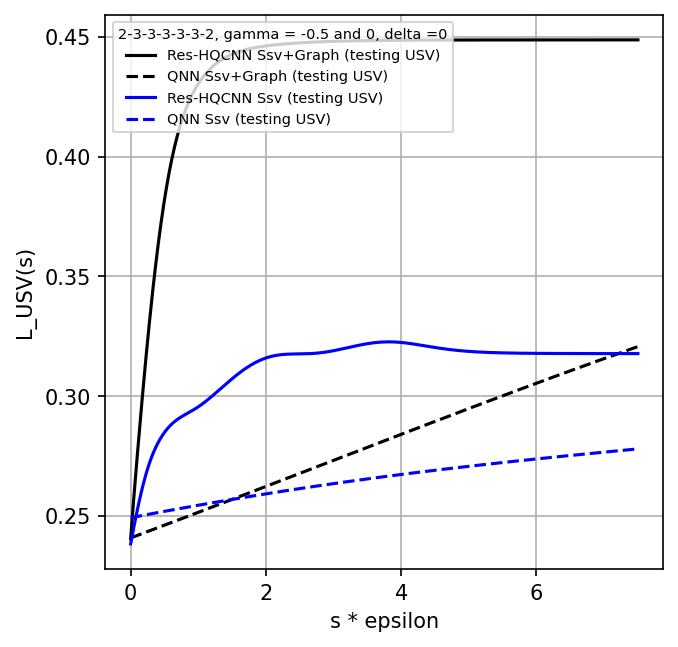}
}
\subfigure[Connected clusters]{
\includegraphics[width=6cm]{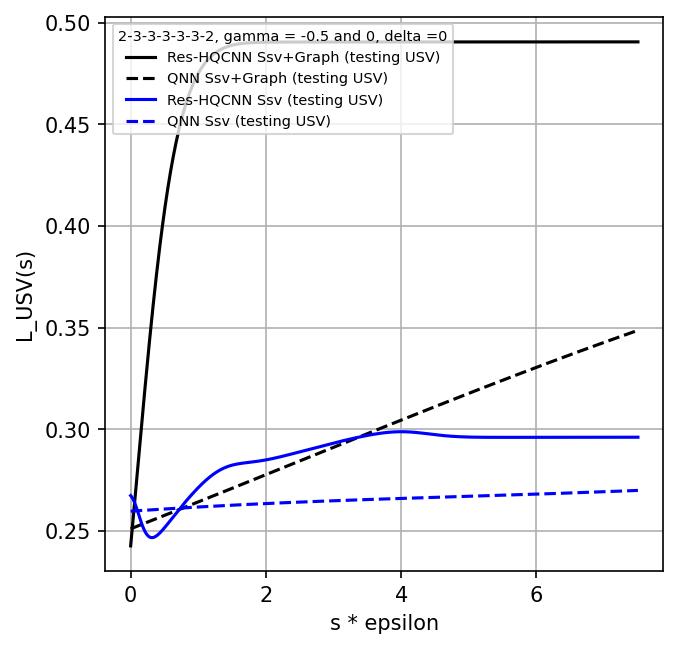}
}
\caption{ \textbf{Numerical results of Res-HQCNN with more than three hidden layers for $8$ training data with $4$ supervised in $750$ training epochs}. The black lines are with graph($\gamma=-0.5$), and the blue lines are without graph($\gamma=0$).}
\label{Figure10}
\end{figure*}

\subsection{The network with more than three hidden layers}
In this part, we investigate whether deeper networks lead to better performance. Specifically, we examine Res-HQCNN with $3$, $4$ and $5$ hidden layers for graph-structured quantum data as an example in Fig.\ref{Figure10}. 
The comparison between the black and blue lines shows that incorporating graph information into the training data improves test performance for both Res-HQCNN and universal QNNs. While the residual block structure in Res-HQCNN enhances the network's ability to learn from graph-structured data, as evidenced by the superior performance of the solid and dashed lines. For example, for the network of  
$[2,\tilde{3},\tilde{3},\tilde{3},\tilde{3},\tilde{3},2]$ in Fig.\ref{Figure10}(f), the blue lines without graph show poor performance. Moreover, because the cost function's performance is influenced by the number of supervised pairs and the topology in graph structure as shown in Eq.(\ref{tesingloss}), using $8$ training samples with $4$ supervised pairs, as in Fig.\ref{Figure10}, leads to a poorer performance shown in Fig.\ref{Figure10}(b) compared to Fig.\ref{Figure10}(a). However, the black solid line still achieves the best performance in Fig.\ref{Figure10}(b), which remains consistent with the previous analysis.

Regarding computational complexity, the residual block structure in Res-HQCNN does not introduce additional overhead, as the element-wise addition is negligible. However, it remains difficult to precisely characterize the computational or time complexity of both Res-HQCNN and universal QNNs due to the many factors involved in training, such as the number of parameters, quantum circuit structure, optimization landscape, and the stochastic nature of gradient-based algorithms.
To provide empirical insight, we recorded the running times shown in Fig.\ref{Figure7}(a) and Fig.\ref{Figure10}(a). In Fig.\ref{Figure7}(a), the running times for the black solid, black dashed, blue solid, and blue dashed lines are $1152.03$, $600.16$, $222.8$ and $120.81$ seconds, respectively. In Fig.\ref{Figure10}(a), the corresponding times are $2539.15$, $884.64$, $547.35$ and $234.31$ seconds. These results indicate that both increased network depth and the use of graph information may lead to higher computational costs.
In future work, we plan to explore strategies to alleviate the computational burden, including more efficient quantum data encoding schemes, reducing trainable parameters via parameter sharing or sparse connectivity, and applying circuit simplification techniques to optimize circuit depth and width. Additionally, hybrid training approaches that combine classical preprocessing with quantum layers may offer improved scalability.

\section{Conclusion and discussion}\label{6}

In this paper, we build a Res-HQCNN model with graph-structured quantum data, and its corresponding training algorithm are explained in detail. Then, we conduct different kinds of comparable experiments to test the learning efficiency of Res-HQCNN for graph-structured data. Compared with the universal QNN model, we have shown that our Res-HQCNN model performs better to learn graph-structured quantum data. Although the information about graph and residual block structure are completely different concepts from math and deep learning, we find that the information about graph can bring better training and testing results no matter how many pairs are supervised. And at the same time, the residual block structure can help deep networks learn graph-structured data faster and better. 

We would like to clarify that this paper is fundamentally different from Ref.\cite{beer2023quantum} and Ref.\cite{liang2021hybrid}. 
Our work specifically focuses on graph-structured quantum data and introduces a tailored Res-HQCNN architecture to handle such data effectively. The update unitaries used in our training algorithm are newly designed, and the corresponding computational process is considerably more intricate. To the best of our knowledge, few studies have explored the interaction between residual learning and the graph structure of quantum data. These are two distinct concepts, yet both significantly influence the training dynamics and final model performance. Our work aims to bridge this gap by investigating their combined effect in a quantum learning framework.

We note that all quantum states, graphs, and labels used in our experiments are synthetically or randomly generated, with manually defined graph topologies such as line or cluster structures. This choice was made to create a controlled environment for evaluating the core capabilities of Res-HQCNN in handling graph-structured quantum data. 
However, we fully recognize that real quantum data, whether obtained through simulation or physical experiments, often do not exhibit explicit graph structures and are inherently affected by various types of noise. These factors pose significant challenges for practical deployment. Although our current study does not address the process of deriving graph structures from real quantum states, we consider this a critical direction for future research. In practice, constructing graph representations from quantum data may involve analyzing entanglement patterns, correlation measures, or spatial configurations of qubits. 
With respect to noise, we agree that robustness is essential for real-world applicability. Although our current experiments are conducted under idealized conditions, the residual architecture in Res-HQCNN is inherently well-suited for mitigating signal degradation \cite{liang2021hybrid}. In future work, we plan to introduce realistic noise models to systematically evaluate the robustness of the proposed framework.

\backmatter

\bmhead{Acknowledgements}
This work is supported by the Youth Doctoral "Starter" Project under Guangzhou Basic and Applied Basic Research Program (No. 2024A04J4450) and National Natural Science Foundation of China (No. 62471187). This work is also supported by the Key Lab of Guangzhou for Quantum Precision Measurement under Grant No. 202201000010, and the Key Research and Development Project of Guangdong Province under Grant No. 2020B0303300001.

\bmhead{Code Availability} The codes are available upon request from the authors.

\section*{Declarations}

\textbf{Conflict of interest} The authors declare that they have no known competing financial interests or personal relationships that could have appeared to influence the work reported in this paper.


\bibliography{sn-bibliography}


\begin{thebibliography}{30}
\ifx \bisbn   \undefined \def \bisbn  #1{ISBN #1}\fi
\ifx \binits  \undefined \def \binits#1{#1}\fi
\ifx \bauthor  \undefined \def \bauthor#1{#1}\fi
\ifx \batitle  \undefined \def \batitle#1{#1}\fi
\ifx \bjtitle  \undefined \def \bjtitle#1{#1}\fi
\ifx \bvolume  \undefined \def \bvolume#1{\textbf{#1}}\fi
\ifx \byear  \undefined \def \byear#1{#1}\fi
\ifx \bissue  \undefined \def \bissue#1{#1}\fi
\ifx \bfpage  \undefined \def \bfpage#1{#1}\fi
\ifx \blpage  \undefined \def \blpage #1{#1}\fi
\ifx \burl  \undefined \def \burl#1{\textsf{#1}}\fi
\ifx \doiurl  \undefined \def \doiurl#1{\url{https://doi.org/#1}}\fi
\ifx \betal  \undefined \def \betal{\textit{et al.}}\fi
\ifx \binstitute  \undefined \def \binstitute#1{#1}\fi
\ifx \binstitutionaled  \undefined \def \binstitutionaled#1{#1}\fi
\ifx \bctitle  \undefined \def \bctitle#1{#1}\fi
\ifx \beditor  \undefined \def \beditor#1{#1}\fi
\ifx \bpublisher  \undefined \def \bpublisher#1{#1}\fi
\ifx \bbtitle  \undefined \def \bbtitle#1{#1}\fi
\ifx \bedition  \undefined \def \bedition#1{#1}\fi
\ifx \bseriesno  \undefined \def \bseriesno#1{#1}\fi
\ifx \blocation  \undefined \def \blocation#1{#1}\fi
\ifx \bsertitle  \undefined \def \bsertitle#1{#1}\fi
\ifx \bsnm \undefined \def \bsnm#1{#1}\fi
\ifx \bsuffix \undefined \def \bsuffix#1{#1}\fi
\ifx \bparticle \undefined \def \bparticle#1{#1}\fi
\ifx \barticle \undefined \def \barticle#1{#1}\fi
\bibcommenthead
\ifx \bconfdate \undefined \def \bconfdate #1{#1}\fi
\ifx \botherref \undefined \def \botherref #1{#1}\fi
\ifx \url \undefined \def \url#1{\textsf{#1}}\fi
\ifx \bchapter \undefined \def \bchapter#1{#1}\fi
\ifx \bbook \undefined \def \bbook#1{#1}\fi
\ifx \bcomment \undefined \def \bcomment#1{#1}\fi
\ifx \oauthor \undefined \def \oauthor#1{#1}\fi
\ifx \citeauthoryear \undefined \def \citeauthoryear#1{#1}\fi
\ifx \endbibitem  \undefined \def \endbibitem {}\fi
\ifx \bconflocation  \undefined \def \bconflocation#1{#1}\fi
\ifx \arxivurl  \undefined \def \arxivurl#1{\textsf{#1}}\fi
\csname PreBibitemsHook\endcsname

\bibitem[\protect\citeauthoryear{Biamonte et~al.}{2017}]{biamonte2017quantum}
\begin{barticle}
\bauthor{\bsnm{Biamonte}, \binits{J.}},
\bauthor{\bsnm{Wittek}, \binits{P.}},
\bauthor{\bsnm{Pancotti}, \binits{N.}},
\bauthor{\bsnm{Rebentrost}, \binits{P.}},
\bauthor{\bsnm{Wiebe}, \binits{N.}},
\bauthor{\bsnm{Lloyd}, \binits{S.}}:
\batitle{Quantum machine learning}.
\bjtitle{Nature}
\bvolume{549}(\bissue{7671}),
\bfpage{195}--\blpage{202}
(\byear{2017})
\end{barticle}
\endbibitem

\bibitem[\protect\citeauthoryear{Cerezo et~al.}{2022}]{cerezo2022challenges}
\begin{barticle}
\bauthor{\bsnm{Cerezo}, \binits{M.}},
\bauthor{\bsnm{Verdon}, \binits{G.}},
\bauthor{\bsnm{Huang}, \binits{H.-Y.}},
\bauthor{\bsnm{Cincio}, \binits{L.}},
\bauthor{\bsnm{Coles}, \binits{P.J.}}:
\batitle{Challenges and opportunities in quantum machine learning}.
\bjtitle{Nature Computational Science}
\bvolume{2}(\bissue{9}),
\bfpage{567}--\blpage{576}
(\byear{2022})
\end{barticle}
\endbibitem

\bibitem[\protect\citeauthoryear{Cong et~al.}{2019}]{cong2019}
\begin{barticle}
\bauthor{\bsnm{Cong}, \binits{I.}},
\bauthor{\bsnm{Choi}, \binits{S.}},
\bauthor{\bsnm{Lukin}, \binits{M.D.}}:
\batitle{Quantum convolutional neural networks}.
\bjtitle{Nature Physics}
\bvolume{15}(\bissue{12}),
\bfpage{1273}--\blpage{1278}
(\byear{2019})
\doiurl{10.1038/s41567-019-0648-8}
\end{barticle}
\endbibitem

\bibitem[\protect\citeauthoryear{Verdon et~al.}{2019}]{gui2019}
\begin{botherref}
\oauthor{\bsnm{Verdon}, \binits{G.}},
\oauthor{\bsnm{McCourt}, \binits{T.}},
\oauthor{\bsnm{Luzhnica}, \binits{E.}},
\oauthor{\bsnm{Singh}, \binits{V.}},
\oauthor{\bsnm{Leichenauer}, \binits{S.}},
\oauthor{\bsnm{Hidary}, \binits{J.}}:
Quantum Graph Neural Networks
(2019)
\end{botherref}
\endbibitem

\bibitem[\protect\citeauthoryear{Ai et~al.}{2022}]{ai2022decompositional}
\begin{botherref}
\oauthor{\bsnm{Ai}, \binits{X.}},
\oauthor{\bsnm{Zhang}, \binits{Z.}},
\oauthor{\bsnm{Sun}, \binits{L.}},
\oauthor{\bsnm{Yan}, \binits{J.}},
\oauthor{\bsnm{Hancock}, \binits{E.}}:
Decompositional quantum graph neural network.
arXiv preprint arXiv:2201.05158
(2022)
\end{botherref}
\endbibitem

\bibitem[\protect\citeauthoryear{Albrecht et~al.}{2023}]{coushu_albrecht2023quantum}
\begin{barticle}
\bauthor{\bsnm{Albrecht}, \binits{B.}},
\bauthor{\bsnm{Dalyac}, \binits{C.}},
\bauthor{\bsnm{Leclerc}, \binits{L.}},
\bauthor{\bsnm{Ortiz-Guti{\'e}rrez}, \binits{L.}},
\bauthor{\bsnm{Thabet}, \binits{S.}},
\bauthor{\bsnm{D'Arcangelo}, \binits{M.}},
\bauthor{\bsnm{Cline}, \binits{J.R.}},
\bauthor{\bsnm{Elfving}, \binits{V.E.}},
\bauthor{\bsnm{Lassabli{\'e}re}, \binits{L.}},
\bauthor{\bsnm{Silv{\'e}rio}, \binits{H.}}, \betal:
\batitle{Quantum feature maps for graph machine learning on a neutral atom quantum processor}.
\bjtitle{Physical Review A}
\bvolume{107}(\bissue{4}),
\bfpage{042615}
(\byear{2023})
\end{barticle}
\endbibitem

\bibitem[\protect\citeauthoryear{Dernbach et~al.}{2019}]{dernbach2019quantum}
\begin{bchapter}
\bauthor{\bsnm{Dernbach}, \binits{S.}},
\bauthor{\bsnm{Mohseni-Kabir}, \binits{A.}},
\bauthor{\bsnm{Pal}, \binits{S.}},
\bauthor{\bsnm{Towsley}, \binits{D.}}:
\bctitle{Quantum walk neural networks for graph-structured data}.
In: \bbtitle{Complex Networks and Their Applications VII: Volume 2 Proceedings The 7th International Conference on Complex Networks and Their Applications COMPLEX NETWORKS 2018 7},
pp. \bfpage{182}--\blpage{193}
(\byear{2019}).
\bcomment{Springer}
\end{bchapter}
\endbibitem

\bibitem[\protect\citeauthoryear{Hu et~al.}{2022}]{hu2022design}
\begin{bchapter}
\bauthor{\bsnm{Hu}, \binits{Z.}},
\bauthor{\bsnm{Li}, \binits{J.}},
\bauthor{\bsnm{Pan}, \binits{Z.}},
\bauthor{\bsnm{Zhou}, \binits{S.}},
\bauthor{\bsnm{Yang}, \binits{L.}},
\bauthor{\bsnm{Ding}, \binits{C.}},
\bauthor{\bsnm{Khan}, \binits{O.}},
\bauthor{\bsnm{Geng}, \binits{T.}},
\bauthor{\bsnm{Jiang}, \binits{W.}}:
\bctitle{On the design of quantum graph convolutional neural network in the nisq-era and beyond}.
In: \bbtitle{2022 IEEE 40th International Conference on Computer Design (ICCD)},
pp. \bfpage{290}--\blpage{297}
(\byear{2022}).
\bcomment{IEEE}
\end{bchapter}
\endbibitem

\bibitem[\protect\citeauthoryear{Mernyei et~al.}{2022}]{mernyei2022equivariant}
\begin{bchapter}
\bauthor{\bsnm{Mernyei}, \binits{P.}},
\bauthor{\bsnm{Meichanetzidis}, \binits{K.}},
\bauthor{\bsnm{Ceylan}, \binits{I.I.}}:
\bctitle{Equivariant quantum graph circuits}.
In: \bbtitle{International Conference on Machine Learning},
pp. \bfpage{15401}--\blpage{15420}
(\byear{2022}).
\bcomment{PMLR}
\end{bchapter}
\endbibitem

\bibitem[\protect\citeauthoryear{Shah et~al.}{2021}]{shah2021quantum}
\begin{barticle}
\bauthor{\bsnm{Shah}, \binits{S.M.A.}},
\bauthor{\bsnm{Ge}, \binits{H.}},
\bauthor{\bsnm{Haider}, \binits{S.A.}},
\bauthor{\bsnm{Irshad}, \binits{M.}},
\bauthor{\bsnm{Noman}, \binits{S.M.}},
\bauthor{\bsnm{Meo}, \binits{J.A.}},
\bauthor{\bsnm{Ahmad}, \binits{A.}},
\bauthor{\bsnm{Younas}, \binits{T.}}:
\batitle{A quantum spatial graph convolutional network for text classification.}
\bjtitle{Comput. Syst. Sci. Eng.}
\bvolume{36}(\bissue{2}),
\bfpage{369}--\blpage{382}
(\byear{2021})
\end{barticle}
\endbibitem

\bibitem[\protect\citeauthoryear{Zheng et~al.}{2021}]{zheng2021quantum}
\begin{bchapter}
\bauthor{\bsnm{Zheng}, \binits{J.}},
\bauthor{\bsnm{Gao}, \binits{Q.}},
\bauthor{\bsnm{L{\"u}}, \binits{Y.}}:
\bctitle{Quantum graph convolutional neural networks}.
In: \bbtitle{2021 40th Chinese Control Conference (CCC)},
pp. \bfpage{6335}--\blpage{6340}
(\byear{2021}).
\bcomment{IEEE}
\end{bchapter}
\endbibitem

\bibitem[\protect\citeauthoryear{Choi et~al.}{2021}]{choi2021tutorial}
\begin{bchapter}
\bauthor{\bsnm{Choi}, \binits{J.}},
\bauthor{\bsnm{Oh}, \binits{S.}},
\bauthor{\bsnm{Kim}, \binits{J.}}:
\bctitle{A tutorial on quantum graph recurrent neural network (qgrnn)}.
In: \bbtitle{2021 International Conference on Information Networking (ICOIN)},
pp. \bfpage{46}--\blpage{49}
(\byear{2021}).
\bcomment{IEEE}
\end{bchapter}
\endbibitem

\bibitem[\protect\citeauthoryear{Zhang et~al.}{2019}]{zhang2019quantum}
\begin{barticle}
\bauthor{\bsnm{Zhang}, \binits{Z.}},
\bauthor{\bsnm{Chen}, \binits{D.}},
\bauthor{\bsnm{Wang}, \binits{J.}},
\bauthor{\bsnm{Bai}, \binits{L.}},
\bauthor{\bsnm{Hancock}, \binits{E.R.}}:
\batitle{Quantum-based subgraph convolutional neural networks}.
\bjtitle{Pattern Recognition}
\bvolume{88},
\bfpage{38}--\blpage{49}
(\byear{2019})
\end{barticle}
\endbibitem

\bibitem[\protect\citeauthoryear{Beer et~al.}{2023}]{beer2023quantum}
\begin{barticle}
\bauthor{\bsnm{Beer}, \binits{K.}},
\bauthor{\bsnm{Khosla}, \binits{M.}},
\bauthor{\bsnm{K{\"o}hler}, \binits{J.}},
\bauthor{\bsnm{Osborne}, \binits{T.J.}},
\bauthor{\bsnm{Zhao}, \binits{T.}}:
\batitle{Quantum machine learning of graph-structured data}.
\bjtitle{Physical Review A}
\bvolume{108}(\bissue{1}),
\bfpage{012410}
(\byear{2023})
\end{barticle}
\endbibitem

\bibitem[\protect\citeauthoryear{He et~al.}{2016}]{he2016deep}
\begin{bchapter}
\bauthor{\bsnm{He}, \binits{K.}},
\bauthor{\bsnm{Zhang}, \binits{X.}},
\bauthor{\bsnm{Ren}, \binits{S.}},
\bauthor{\bsnm{Sun}, \binits{J.}}:
\bctitle{Deep residual learning for image recognition}.
In: \bbtitle{Proceedings of the IEEE Conference on Computer Vision and Pattern Recognition},
pp. \bfpage{770}--\blpage{778}
(\byear{2016})
\end{bchapter}
\endbibitem

\bibitem[\protect\citeauthoryear{Liang et~al.}{2021}]{liang2021hybrid}
\begin{barticle}
\bauthor{\bsnm{Liang}, \binits{Y.}},
\bauthor{\bsnm{Peng}, \binits{W.}},
\bauthor{\bsnm{Zheng}, \binits{Z.-J.}},
\bauthor{\bsnm{Silv{\'e}n}, \binits{O.}},
\bauthor{\bsnm{Zhao}, \binits{G.}}:
\batitle{A hybrid quantum--classical neural network with deep residual learning}.
\bjtitle{Neural Networks}
\bvolume{143},
\bfpage{133}--\blpage{147}
(\byear{2021})
\end{barticle}
\endbibitem

\bibitem[\protect\citeauthoryear{Kashif and Al-Kuwari}{2024}]{kashif2024resqnets}
\begin{barticle}
\bauthor{\bsnm{Kashif}, \binits{M.}},
\bauthor{\bsnm{Al-Kuwari}, \binits{S.}}:
\batitle{Resqnets: a residual approach for mitigating barren plateaus in quantum neural networks}.
\bjtitle{EPJ Quantum Technology}
\bvolume{11}(\bissue{1}),
\bfpage{4}
(\byear{2024})
\end{barticle}
\endbibitem

\bibitem[\protect\citeauthoryear{Wen et~al.}{2024}]{wen2024enhancing}
\begin{botherref}
\oauthor{\bsnm{Wen}, \binits{J.}},
\oauthor{\bsnm{Huang}, \binits{Z.}},
\oauthor{\bsnm{Cai}, \binits{D.}},
\oauthor{\bsnm{Qian}, \binits{L.}}:
Enhancing the expressivity of quantum neural networks with residual connections.
arXiv preprint arXiv:2401.15871
(2024)
\end{botherref}
\endbibitem

\bibitem[\protect\citeauthoryear{Chen et~al.}{2021}]{chen2021hybrid}
\begin{botherref}
\oauthor{\bsnm{Chen}, \binits{S.Y.-C.}},
\oauthor{\bsnm{Wei}, \binits{T.-C.}},
\oauthor{\bsnm{Zhang}, \binits{C.}},
\oauthor{\bsnm{Yu}, \binits{H.}},
\oauthor{\bsnm{Yoo}, \binits{S.}}:
Hybrid quantum-classical graph convolutional network.
arXiv preprint arXiv:2101.06189
(2021)
\end{botherref}
\endbibitem

\bibitem[\protect\citeauthoryear{Skolik et~al.}{2023}]{skolik2023equivariant}
\begin{barticle}
\bauthor{\bsnm{Skolik}, \binits{A.}},
\bauthor{\bsnm{Cattelan}, \binits{M.}},
\bauthor{\bsnm{Yarkoni}, \binits{S.}},
\bauthor{\bsnm{B{\"a}ck}, \binits{T.}},
\bauthor{\bsnm{Dunjko}, \binits{V.}}:
\batitle{Equivariant quantum circuits for learning on weighted graphs}.
\bjtitle{npj Quantum Information}
\bvolume{9}(\bissue{1}),
\bfpage{47}
(\byear{2023})
\end{barticle}
\endbibitem

\bibitem[\protect\citeauthoryear{Ryu et~al.}{2023}]{ryu2023quantum}
\begin{barticle}
\bauthor{\bsnm{Ryu}, \binits{J.-Y.}},
\bauthor{\bsnm{Elala}, \binits{E.}},
\bauthor{\bsnm{Rhee}, \binits{J.-K.K.}}:
\batitle{Quantum graph neural network models for materials search}.
\bjtitle{Materials}
\bvolume{16}(\bissue{12}),
\bfpage{4300}
(\byear{2023})
\end{barticle}
\endbibitem

\bibitem[\protect\citeauthoryear{Jian et~al.}{2016}]{jian2016deep}
\begin{bchapter}
\bauthor{\bsnm{Jian}, \binits{S.}},
\bauthor{\bsnm{Kaiming}, \binits{H.}},
\bauthor{\bsnm{Shaoqing}, \binits{R.}},
\bauthor{\bsnm{Xiangyu}, \binits{Z.}}:
\bctitle{Deep residual learning for image recognition}.
In: \bbtitle{IEEE Conference on Computer Vision \& Pattern Recognition},
pp. \bfpage{770}--\blpage{778}
(\byear{2016})
\end{bchapter}
\endbibitem

\bibitem[\protect\citeauthoryear{Lee et~al.}{2018}]{lee2018deep}
\begin{barticle}
\bauthor{\bsnm{Lee}, \binits{D.}},
\bauthor{\bsnm{Yoo}, \binits{J.}},
\bauthor{\bsnm{Tak}, \binits{S.}},
\bauthor{\bsnm{Ye}, \binits{J.C.}}:
\batitle{Deep residual learning for accelerated mri using magnitude and phase networks}.
\bjtitle{IEEE Transactions on Biomedical Engineering}
\bvolume{65}(\bissue{9}),
\bfpage{1985}--\blpage{1995}
(\byear{2018})
\end{barticle}
\endbibitem

\bibitem[\protect\citeauthoryear{Wu et~al.}{2018}]{wu2018deep}
\begin{barticle}
\bauthor{\bsnm{Wu}, \binits{S.}},
\bauthor{\bsnm{Zhong}, \binits{S.}},
\bauthor{\bsnm{Liu}, \binits{Y.}}:
\batitle{Deep residual learning for image steganalysis}.
\bjtitle{Multimedia tools and applications}
\bvolume{77},
\bfpage{10437}--\blpage{10453}
(\byear{2018})
\end{barticle}
\endbibitem

\bibitem[\protect\citeauthoryear{Zhang et~al.}{2019}]{zhang2019deep}
\begin{barticle}
\bauthor{\bsnm{Zhang}, \binits{W.}},
\bauthor{\bsnm{Li}, \binits{X.}},
\bauthor{\bsnm{Ding}, \binits{Q.}}:
\batitle{Deep residual learning-based fault diagnosis method for rotating machinery}.
\bjtitle{ISA transactions}
\bvolume{95},
\bfpage{295}--\blpage{305}
(\byear{2019})
\end{barticle}
\endbibitem

\bibitem[\protect\citeauthoryear{Shafiq and Gu}{2022}]{shafiq2022deep}
\begin{barticle}
\bauthor{\bsnm{Shafiq}, \binits{M.}},
\bauthor{\bsnm{Gu}, \binits{Z.}}:
\batitle{Deep residual learning for image recognition: A survey}.
\bjtitle{Applied Sciences}
\bvolume{12}(\bissue{18}),
\bfpage{8972}
(\byear{2022})
\end{barticle}
\endbibitem

\bibitem[\protect\citeauthoryear{Zhang et~al.}{2023}]{zhang2023selective}
\begin{barticle}
\bauthor{\bsnm{Zhang}, \binits{S.}},
\bauthor{\bsnm{Liu}, \binits{Z.}},
\bauthor{\bsnm{Chen}, \binits{Y.}},
\bauthor{\bsnm{Jin}, \binits{Y.}},
\bauthor{\bsnm{Bai}, \binits{G.}}:
\batitle{Selective kernel convolution deep residual network based on channel-spatial attention mechanism and feature fusion for mechanical fault diagnosis}.
\bjtitle{ISA transactions}
\bvolume{133},
\bfpage{369}--\blpage{383}
(\byear{2023})
\end{barticle}
\endbibitem

\bibitem[\protect\citeauthoryear{Alenezi et~al.}{2023}]{alenezi2023wavelet}
\begin{barticle}
\bauthor{\bsnm{Alenezi}, \binits{F.}},
\bauthor{\bsnm{Armghan}, \binits{A.}},
\bauthor{\bsnm{Polat}, \binits{K.}}:
\batitle{Wavelet transform based deep residual neural network and relu based extreme learning machine for skin lesion classification}.
\bjtitle{Expert Systems with Applications}
\bvolume{213},
\bfpage{119064}
(\byear{2023})
\end{barticle}
\endbibitem

\bibitem[\protect\citeauthoryear{Ozawa}{2000}]{HSozawa2000entanglement}
\begin{barticle}
\bauthor{\bsnm{Ozawa}, \binits{M.}}:
\batitle{Entanglement measures and the hilbert--schmidt distance}.
\bjtitle{Physics Letters A}
\bvolume{268}(\bissue{3}),
\bfpage{158}--\blpage{160}
(\byear{2000})
\end{barticle}
\endbibitem

\bibitem[\protect\citeauthoryear{Beer et~al.}{2020}]{beer2020training}
\begin{barticle}
\bauthor{\bsnm{Beer}, \binits{K.}},
\bauthor{\bsnm{Bondarenko}, \binits{D.}},
\bauthor{\bsnm{Farrelly}, \binits{T.}},
\bauthor{\bsnm{Osborne}, \binits{T.J.}},
\bauthor{\bsnm{Salzmann}, \binits{R.}},
\bauthor{\bsnm{Scheiermann}, \binits{D.}},
\bauthor{\bsnm{Wolf}, \binits{R.}}:
\batitle{Training deep quantum neural networks}.
\bjtitle{Nature communications}
\bvolume{11}(\bissue{1}),
\bfpage{808}
(\byear{2020})
\end{barticle}
\endbibitem

\end{thebibliography}

\begin{appendices}
\section{The proof of Theorem 1}
\begin{proof}
   The update matrix for a Res-HQCNN $[2,\Tilde{3},2]$ can be calculated in the following steps.  
We begin the process from the derivative function for cost function: $\frac{dC}{ds}=\lim\limits_{\epsilon \to 0}{\frac{C(s+\epsilon)-C(s)}{\epsilon}}$. Based on the definition of function $C_{G}(s+\epsilon)$ in Eq.(\ref{cg}), we firstly consider the output state of the updated unitary
\begin{align}\label{a1}
\setcounter{equation}{0}
\renewcommand\theequation{A.\arabic{equation}}
U^{1}(s+\epsilon)=e^{i\epsilon K_{3}^{1}(s)}U_{3}^{1}(s)e^{i\epsilon K_{2}^{1}(s)}U_{2}^{1}(s)e^{i\epsilon K_{1}^{1}(s)}U_{1}^{1}(s), 
\end{align}
\begin{align}\label{a2}
U^{2}(s+\epsilon)=e^{i\epsilon K_{2}^{2}(s)}U_{2}^{2}(s)e^{i\epsilon K_{1}^{2}(s)}U_{1}^{2}(s).
\end{align}
As for the output state of the updated unitary $U^{1}(s+\epsilon)$ in the hidden layer, for convenience, we omit to write the parameter $s$ in $K_{j}^{1}(s)$ and $U_{j}^{1}(s)$ with $j=1,2,3$. Then using Eq.(\ref{a1}), we have
\begin{align}\label{a3}
    \rho_x^{1_{out}}(s+\epsilon)=&\trace_{in}(e^{i\epsilon K_{3}^{1}}U_{3}^{1} e^{i\epsilon K_{2}^{1}}U_{2}^{1} e^{i\epsilon K_{1}^{1}}U_{1}^{1} (\ket{\phi_{x}^{in}}\bra{\phi_{x}^{in}}\otimes\ket{000}_{hid}\bra{000})
    \nonumber\\
    &{U_{1}^{1}}^{\dagger}e^{-i\epsilon K_{1}^{1}}
    {U_{2}^{1}}^{\dagger}e^{-i\epsilon K_{2}^{1}} 
    {U_{3}^{1}}^{\dagger}e^{-i\epsilon K_{3}^{1}})\nonumber\\
    =&\rho_x^{1_{out}}(s)+i\epsilon \trace_{in}(U_{3}^{1}U_{2}^{1}K_{1}^{1}U_{1}^{1}(\ket{\phi_{x}^{in}}\bra{\phi_{x}^{in}}\otimes\ket{000}_{hid}\bra{000})\nonumber\\
    &{U_{1}^{1}}^{\dagger}{U_{2}^{1}}^{\dagger}{U_{3}^{1}}^{\dagger}
+U_{3}^{1}K_{2}^{1}U_{2}^{1}U_{1}^{1}\left(\ket{\phi_{x}^{in}}\bra{\phi_{x}^{in}}\otimes\ket{000}_{hid}\bra{000}\right)\nonumber\\
&{U_{1}^{1}}^{\dagger}{U_{2}^{1}}^{\dagger}{U_{3}^{1}}^{\dagger}
+K_{3}^{1}U_{3}^{1}U_{2}^{1}U_{1}^{1}\left(\ket{\phi_{x}^{in}}\bra{\phi_{x}^{in}}\otimes\ket{000}_{hid}\bra{000}\right)
\nonumber\\
&{U_{1}^{1}}^{\dagger}{U_{2}^{1}}^{\dagger}{U_{3}^{1}}^{\dagger}-U_{3}^{1}U_{2}^{1}U_{1}^{1}\left(\ket{\phi_{x}^{in}}\bra{\phi_{x}^{in}}\otimes\ket{000}_{hid}\bra{000}\right){U_{1}^{1}}^{\dagger}{U_{2}^{1}}^{\dagger}{U_{3}^{1}}^{\dagger}K_{3}^{1}\nonumber\\
    &-U_{3}^{1}U_{2}^{1}U_{1}^{1}(\ket{\phi_{x}^{in}}\bra{\phi_{x}^{in}}\otimes\ket{000}_{hid}\bra{000}){U_{1}^{1}}^{\dagger}{U_{2}^{1}}^{\dagger}K_{2}^{1}{U_{3}^{1}}^{\dagger}
    \nonumber\\
    &-U_{3}^{1}U_{2}^{1}U_{1}^{1}
\left(\ket{\phi_{x}^{in}}\bra{\phi_{x}^{in}}\otimes\ket{000}_{hid}\bra{000}\right) {U_{1}^{1}}^{\dagger}K_{1}^{1}{U_{2}^{1}}^{\dagger}{U_{3}^{1}}^{\dagger})
+o(\epsilon^2)\nonumber\\
    =&\rho_x^{1_{out}}(s)+R(\epsilon)
    +o(\epsilon^2),
\end{align}
Here the second inequality is due to the Taylor's Formula of the exponential function, and the second term in the second inequality denoted as $R(\epsilon)$.
Next going through the residual block structure to $\rho_x^{1_{out}}(s+\epsilon)$, we obtain the real input state of layer with $l=2$:
\begin{align}\label{a4}
 \rho_x^{2_{in}}(s+\epsilon)=& \rho_x^{1_{out}}(s+\epsilon)+\ket{\phi_x^{in}}\bra{\phi_x^{in}}\otimes \ket{0}\bra{0}\nonumber\\
=&(\rho_x^{1_{out}}(s)+\ket{\phi_x^{in}}\bra{\phi_x^{in}}\otimes \ket{0}\bra{0})+R(\epsilon)
+o(\epsilon^2)\nonumber\\
=&\rho_x^{2_{in}}(s)+R(\epsilon)+o(\epsilon^2).
\end{align}
So the final output state of Res-HQCNN $[2,\tilde{3},2]$ using $U^2(s+\epsilon)$ in Eq.(\ref{a2}) is
\begin{align}\label{a5}
    \rho_x^{2_{out}}(s+\epsilon)=&\trace_{hid}(e^{i\epsilon K_{2}^{2}}U_{2}^{2} e^{i\epsilon K_{1}^{2}}U_{1}^{2}  (\rho_x^{2_{in}}(s+\epsilon)\otimes\ket{00}_{out}\bra{00})
    {U_{1}^{2}}^{\dagger}e^{-i\epsilon K_{1}^{2}} {U_{2}^{2}}^{\dagger}e^{-i\epsilon K_{2}^{2}} )\nonumber\\
    =&\trace_{hid}(e^{i\epsilon K_{2}^{2}}U_{2}^{2} e^{i\epsilon K_{1}^{2}}U_{1}^{2} (\rho_x^{2_{in}}(s)\otimes\ket{00}_{out}\bra{00})
    {U_{1}^{2}}^{\dagger}e^{-i\epsilon K_{1}^{2}} {U_{2}^{2}}^{\dagger}e^{-i\epsilon K_{2}^{2}} )\nonumber\\
    &+\trace_{hid}(e^{i\epsilon K_{2}^{2}}U_{2}^{2} e^{i\epsilon K_{1}^{2}}U_{1}^{2}  (R(\epsilon)\otimes
    \ket{00}_{out}\bra{00}){U_{1}^{2}}^{\dagger}e^{-i\epsilon K_{1}^{2}}{U_{2}^{2}}^{\dagger}e^{-i\epsilon K_{2}^{2}})
    +o(\epsilon^2)\nonumber\\
=&(\rho_x^{2_{out}}(s)+A_1)+A_2+o(\epsilon^2),
\end{align}
Here
\begin{align*}
A_1=&i\epsilon\trace_{hid}(U_2^{2}K_1^{2}U_1^{2}(\rho_x^{2_{in}}(s)\otimes\ket{00}_{out}\bra{00}){U_{1}^{2}}^{\dagger}{U_{2}^{2}}^{\dagger}+K_2^{2}U_2^{2}U_1^{2}(\rho_x^{2_{in}}(s)\otimes\ket{00}_{out}\bra{00}){U_{1}^{2}}^{\dagger}{U_{2}^{2}}^{\dagger}\nonumber\\
&-U_2^{2}U_1^{2}(\rho_x^{2_{in}}(s)\otimes\ket{00}_{out}\bra{00}){U_{1}^{2}}^{\dagger}{U_{2}^{2}}^{\dagger}K_2^{2}-U_2^{2}U_1^{2}(\rho_x^{2_{in}}(s)\otimes\ket{00}_{out}\bra{00}){U_{1}^{2}}^{\dagger}K_1^{2}{U_{2}^{2}}^{\dagger}),
\end{align*}
\begin{align*}
A_2=&i\epsilon\trace_{in,hid}(U_2^{2}U_1^{2}U_{3}^{1}U_{2}^{1}K_{1}^{1}U_{1}^{1}(\ket{\phi_x^{in}}\bra{\phi_x^{in}}\otimes\ket{00000}_{hid,out}\bra{00000}){U_{1}^{1}}^{\dagger}{U_{2}^{1}}^{\dagger}{U_{3}^{1}}^{\dagger}{U_{1}^{2}}^{\dagger}{U_{2}^{2}}^{\dagger}\nonumber\\
&+U_2^{2}U_1^{2}U_{3}^{1}K_{2}^{1}U_{2}^{1}U_{1}^{1}(\ket{\phi_x^{in}}\bra{\phi_x^{in}}\otimes\ket{00000}_{hid,out}\bra{00000}){U_{1}^{1}}^{\dagger}{U_{2}^{1}}^{\dagger}{U_{3}^{1}}^{\dagger}{U_{1}^{2}}^{\dagger}{U_{2}^{2}}^{\dagger}\nonumber\\
&+U_2^{2}U_1^{2}K_{3}^{1}U_{3}^{1}U_{2}^{1}U_{1}^{1}(\ket{\phi_x^{in}}\bra{\phi_x^{in}}\otimes\ket{00000}_{hid,out}\bra{00000}){U_{1}^{1}}^{\dagger}{U_{2}^{1}}^{\dagger}{U_{3}^{1}}^{\dagger}{U_{1}^{2}}^{\dagger}{U_{2}^{2}}^{\dagger}\nonumber\\
&-U_2^{2}U_1^{2}U_{3}^{1}U_{2}^{1}U_{1}^{1}(\ket{\phi_x^{in}}\bra{\phi_x^{in}}\otimes\ket{00000}_{hid,out}\bra{00000}){U_{1}^{1}}^{\dagger}{U_{2}^{1}}^{\dagger}{U_{3}^{1}}^{\dagger}K_{3}^{1}{U_{1}^{2}}^{\dagger}{U_{2}^{2}}^{\dagger})\nonumber\\
&-U_2^{2}U_1^{2}U_{3}^{1}U_{2}^{1}U_{1}^{1}(\ket{\phi_x^{in}}\bra{\phi_x^{in}}\otimes\ket{00000}_{hid,out}\bra{00000}){U_{1}^{1}}^{\dagger}{U_{2}^{1}}^{\dagger}K_{2}^{1}{U_{3}^{1}}^{\dagger}{U_{1}^{2}}^{\dagger}{U_{2}^{2}}^{\dagger})\nonumber\\
&-U_2^{2}U_1^{2}U_{3}^{1}U_{2}^{1}U_{1}^{1}(\ket{\phi_x^{in}}\bra{\phi_x^{in}}\otimes\ket{00000}_{hid,out}\bra{00000}){U_{1}^{1}}^{\dagger}K_{1}^{1}{U_{2}^{1}}^{\dagger}{U_{3}^{1}}^{\dagger}{U_{1}^{2}}^{\dagger}{U_{2}^{2}}^{\dagger}).
\end{align*}
Now based on the preparation above and choose $x=v,w$, we can calculate the mathematical derivative function of the cost function
\begin{align}\label{a6}
\frac{dC_{G}}{ds}=&\lim_{\epsilon \to 0}{\frac{C_{G}(s+\epsilon)-C_{G}(s)}{\epsilon}}\nonumber\\
=&\lim_{\epsilon \to 0}\frac{\sum_{v,w\in V}[A]_{v,w}D_{HS}\left(\mathcal{R}(\rho_v)(s+\epsilon),\mathcal{R}(\sigma_w)(s+\epsilon)\right)}{\epsilon}\nonumber\\
&-\lim_{\epsilon \to 0}\frac{\sum_{v,w\in V}[A]_{v,w}D_{HS}\left(\mathcal{R}(\rho_v)(s),\mathcal{R}(\sigma_w)(s)\right)}{\epsilon}\nonumber\\
=&\lim_{\epsilon \to 0}\frac{\sum_{v,w\in V}[A]_{v,w}[\trace(\rho_v^{out}(s+\epsilon)-\rho_w^{out}(s+\epsilon))^2]}{\epsilon}\nonumber\\
&-\lim_{\epsilon \to 0}\frac{\sum_{v,w\in V}[A]_{v,w}[\trace(\rho_v^{out}(s)-\rho_w^{out}(s))^2]}{\epsilon}\nonumber\\
=&2\sum_{v,w\in V}[A]_{v,w}\trace(Id(3)\otimes(\rho_{v}^{out}-\rho_w^{out}))([i K_2^2,U_2^2U_1^2((\rho_{v}^{2_{in}}-\rho_w^{2_{in}})\otimes\ket{00}_{out}\bra{00}\nonumber\\
&{U_1^2}^{\dagger}{U_2^2}^{\dagger}]+U_2^2[i K_1^2,U_1^2((\rho_{v}^{2_{in}}-\rho_w^{2_{in}})\otimes\ket{00}_{out}\bra{00}{U_1^2}^{\dagger}]
{U_2^2}^{\dagger}+\cdots+U_2^1U_3^1U_1^2U_2^2\nonumber\\
&[i K_1^1, U_1^1((\rho_{v}^{1_{in}}-\rho_w^{1_{in}})\otimes\ket{000}_{hid}\bra{000})
{U_1^1}^{\dagger}]{U_2^2}^{\dagger}{U_1^2}^{\dagger}{U_3^1}^{\dagger}{U_2^1}^{\dagger}
)\nonumber\\
=&2\sum_{v,w\in V}[A]_{v,w}
\trace(([U_2^2U_1^2((\rho_{v}^{2_{in}}-\rho_w^{2_{in}})\otimes\ket{00}_{out}\bra{00}{U_1^2}^{\dagger}{U_2^2}^{\dagger},Id(3)\otimes(\rho_{v}^{out}-\rho_w^{out})]K_2^2\nonumber\\
&
+[U_1^2((\rho_{v}^{2_{in}}-\rho_w^{2_{in}})\otimes\ket{00}_{out}\bra{00}{U_1^2}^{\dagger},{U_2^2}^{\dagger}(Id(3)\otimes(\rho_{v}^{out}-\rho_w^{out}))U_2^2]K_1^2+\cdots+
[U_1^1\nonumber\\
&((\rho_{v}^{1_{in}}-\rho_w^{1_{in}})\otimes\ket{000}_{hid}\bra{000}){U_1^1}^{\dagger},{U_2^1}^{\dagger}{U_3^1}^{\dagger}{U_1^2}^{\dagger}{U_2^2}^{\dagger}(Id(3)\otimes(\rho_{v}^{out}-\rho_w^{out}))U_2^2U_1^2U_3^1U_2^1]K_1^1)\nonumber\\
=&2\sum_{v,w\in V}[A]_{v,w}\trace((M_2^2+N_2^2)K_2^2+
(M_1^2+N_1^2)K_1^2)+\frac{i}{N}\sum_{x=1}^{N}\trace(M_3^1K_3^1+
M_2^1K_2^1+M_1^1K_1^1),
\end{align}
where 
\begin{align*}
M_1^1=&[U_1^1\left((\rho_v^{in}-\rho_w^{in})\otimes\ket{000}_{hid}\bra{000}\right){U_1^1}^{\dagger},{U_2^1}^{\dagger}{U_3^1}^{\dagger}{U_1^2}^{\dagger}{U_2^2}^{\dagger}\left(Id(3)\otimes(\rho_{v}^{out}-\rho_w^{out})\right)U_2^2U_1^2U_2^1U_3^1],
\end{align*}
\begin{align*}
M_2^1=&[U_2^1U_1^1\left((\rho_v^{in}-\rho_w^{in})\otimes\ket{000}_{hid}\bra{000}\right){U_1^1}^{\dagger}{U_2^1}^{\dagger},{U_3^1}^{\dagger}{U_1^2}^{\dagger}{U_2^2}^{\dagger}\left(Id(3)\otimes(\rho_{v}^{out}-\rho_w^{out})\right)U_2^2U_1^2U_3^1],
\end{align*}
\begin{align*}
M_3^1=&[U_3^1U_2^1U_1^1\left((\rho_v^{in}-\rho_w^{in})\otimes\ket{000}_{hid}\bra{000}\right){U_1^1}^{\dagger}{U_2^1}^{\dagger}{U_3^1}^{\dagger},{U_1^2}^{\dagger}{U_2^2}^{\dagger}\left(Id(3)\otimes(\rho_{v}^{out}-\rho_w^{out})\right)U_2^2U_1^2],
\end{align*}
\begin{align*}
M_1^2=&[U_1^2U_3^1U_2^1U_1^1\left((\rho_v^{in}-\rho_w^{in})\otimes\ket{000}_{hid}\bra{000}\right){U_1^1}^{\dagger}{U_2^1}^{\dagger}{U_3^1}^{\dagger}{U_1^2}^{\dagger},{U_2^2}^{\dagger}\left(Id(3)\otimes(\rho_{v}^{out}-\rho_w^{out})\right)U_2^2],
\end{align*}
\begin{align*}
M_2^2=&[U_2^2U_1^2U_3^1U_2^1U_1^1\left((\rho_v^{in}-\rho_w^{in})\otimes\ket{000}_{hid}\bra{000}\right){U_1^1}^{\dagger}{U_2^1}^{\dagger}{U_3^1}^{\dagger}{U_1^2}^{\dagger}{U_2^2}^{\dagger},\left(Id(3)\otimes(\rho_{v}^{out}-\rho_w^{out})\right)],
\end{align*}
\begin{align*}
N_1^2=&[U_1^2\left((\rho_v^{in}-\rho_w^{in})\otimes\ket{000}_{hid}\bra{000}\right){U_1^2}^{\dagger},{U_2^2}^{\dagger}\left(Id(3)\otimes(\rho_{v}^{out}-\rho_w^{out})\right)U_2^2],
\end{align*}
\begin{align*}
N_2^2=&[U_2^2U_1^2((\rho_v^{in}-\rho_w^{in})\otimes\ket{000}_{hid}\bra{000}){U_1^2}^{\dagger}{U_2^2}^{\dagger},Id(3)\otimes(Id(3)\otimes(\rho_{v}^{out}-\rho_w^{out}))],
\end{align*}
Here the third equality in Eq.(\ref{a6}) is due to Eq.(\ref{a5}) and the definition of $D_{HS}(\rho,\sigma)=\trace[(\rho-\sigma)^2]$. The coefficient $2$ comes from $tr(AB)=tr(BA)$ and $\rho^2=\rho \rho^{\dagger}$. The fourth equality also comes from $tr(AB)=tr(BA)$. Using Eq.(\ref{a4}), we get the last equality.
$[A,B]=A\times B-A\times B$ is a commutator operator.

Next we want to find the maximum of the cost function $\frac{dC_G}{ds}$ using a  Lagrange multiplier $\lambda$, which is a real number. Firstly, when $l=1$, since unitary $U_j^1(s)$ works on three qubits, then  $K_j^1$ is
\begin{align}\label{a7}
K_j^1(s)=\sum_{\alpha_1,\alpha_2,\beta}{K_j^1}_{\alpha_1,\alpha_2,\beta}(s)(\sigma^{\alpha_1}\otimes\sigma^{\alpha_2}\otimes\sigma^{\beta}),
\end{align}
Here $\sigma$ is Pauli matrix in single qubit. $\alpha_1$ and $\alpha_2$ represent the qubits in the input layer and $\beta$ represents the current qubit of unitary $U_j^1$ in the hidden layer. Then when $j=1$, the analysis above leads us to solve a maximization problem:
\begin{align}
\begin{split}
&\max_{K_1^1,\alpha_1,\alpha_2,\beta}(\frac{dC_G}{ds}-\lambda\sum_{\alpha_1,\alpha_2,\beta}{{K_1^1}_{\alpha_1,\alpha_2,\beta}}^2)\nonumber\\
&=\max_{K_1^1,\alpha_1,\alpha_2,\beta}(2\sum_{v,w\in V}[A]_{v,w}\trace((M_2^2+N_2^2)K_2^2+(M_1^2+N_1^2)K_1^2)\nonumber\\
&+
2\sum_{v,w\in V}[A]_{v,w}\trace(M_3^1K_3^1+M_2^1K_2^1+M_1^1K_1^1)
- \lambda\sum_{\alpha_1,\alpha_2,\beta}{{K_1^1}_{\alpha_1,\alpha_2,\beta}}^2 )\nonumber\\
&=\max_{K_1^1,\alpha_1,\alpha_2,\beta}(2\sum_{v,w\in V}[A]_{v,w}\trace_{\alpha_1,\alpha_2,\beta}[\trace_{rest}((M_1^2+N_1^2)K_1^2+(M_2^2+N_2^2)K_2^2
+M_3^1K_3^1\nonumber\\
&+M_2^1K_2^1)+\trace_{rest}(M_1^1)\sum_{\alpha_1,\alpha_2,\beta}{K_1^1}_{\alpha_1,\alpha_2,\beta}(\sigma^{\alpha_1}
\otimes\sigma^{\alpha_2}\otimes\sigma^{\beta})]
- \lambda\sum_{\alpha_1,\alpha_2,\beta}{{K_1^1}_{\alpha_1,\alpha_2,\beta}}^2 ).\nonumber\\
\end{split}
\end{align}
Taking the derivative of ${K_j^1}_{\alpha_1,\alpha_2,\beta}$ to be zero yields
\begin{align}
{K_1^1}_{\alpha_1,\alpha_2,\beta}=&\frac{i}{\lambda }\sum_{x=1}^{N}\trace_{\alpha_1,\alpha_2,\beta}((\trace_{rest}(M_1^1))(\sigma^{\alpha_1}\otimes
\sigma^{\alpha_2}\otimes
\sigma^{\beta})).
\end{align}
This above equation further leads to the matrix:
\begin{align}
K_1^1=&\sum_{\alpha_1,\alpha_2,\beta}{K_1^1}_{\alpha_1,\alpha_2,\beta}(\sigma^{\alpha_1}\otimes\sigma^{\alpha_2}\otimes\sigma^{\beta})\nonumber\\
=&2\sum_{v,w\in V}[A]_{v,w}\sum_{\alpha_1,\alpha_2,\beta}\trace_{\alpha_1,\alpha_2,\beta}((\trace_{rest}(M_1^1))(\sigma^{\alpha_1}\otimes\sigma^{\alpha_2}\otimes\sigma^{\beta})
\times(\sigma^{\alpha_1}\otimes\sigma^{\alpha_2}\otimes\sigma^{\beta}))\nonumber\\
=&\frac{8i}{\lambda}\sum_{x=1}^{N}\trace_{rest}(M_1^1).
\end{align}
Here $\frac{1}{\lambda}$ is the learning rate in this paper.
Analogously, we find out the formulas for $K_2^1$ and $K_3^1$:
$K_2^1=\frac{8i}{\lambda}\sum_{x=1}^{N}\trace_{rest}(M_2^1);$
$K_3^1=\frac{8i}{\lambda}\sum_{x=1}^{N}\trace_{rest}(M_3^1).$

When $l=2$ in Res-HQCNN $[2,\tilde{3},2]$, we can get $K_q^2$
through going on the similar method for $K_j^1$:
$$K_q^2=\frac{16i}{\lambda }\sum_{x=1}^{N}\trace_{rest}(M_q^2+N_q^2),\quad q=1,2.$$

\end{proof}

\end{appendices}

\end{document}